\newtheorem{theorem}{Theorem}
\newtheorem{lemma}{Lemma}
\newtheorem{proposition}{Proposition}
\newtheorem{corollary}{Corollary}
\theoremstyle{definition}
\newtheorem{definition}{Definition}
\newtheorem{assumption}{Assumption}
\theoremstyle{remark}
\lstdefinestyle{mystyle}{
  basicstyle=\ttfamily\footnotesize,
  numbers=left, numberstyle=\tiny,
  stepnumber=1, numbersep=5pt,
  breaklines=true, tabsize=2, showstringspaces=false
}
\newcommand{\nat}{\mathbb{N}}
\newcommand{\reals}{\mathbb{R}}
\newcommand{\dataset}{\mathcal{D}}
\newcommand{\ndataset}{\tilde{\mathcal{D}}}
\newcommand{\CX}{\mathcal{X}}
\newcommand{\gaussN}{\mathcal{N}}
\newcommand{\Prob}{\mathbb{P}}
\newcommand{\Tr}{\textrm{Tr}}
\newcommand{\optstat}{\Phi^*(\bar{\boldsymbol{\beta}})}
\newcommand{\optstati}{\hat{\Phi}^i(\ndataset)}
\newcommand{\stati}{\Phi^i}
\newcommand{\fcdf}{F_{\rvtest}}
\newcommand{\fccdf}{\bar{F}_{\rvtest}}
\newcommand{\Popt}{\mathbb{P}_{\optstat}}
\newcommand{\barstati}{\bar{\Phi}^i(\ndataset)}
\newcommand{\baroptstat}{\bar{\Phi}^*(\ndataset)}
\newcommand{\rvtest}{\Psi}
\newcommand{\rvtesti}{\Psi^i}
\newcommand{\numagents}{M}
\newcommand{\boldf}{\boldsymbol{f}}
\newcommand{\conset}{X_c^{\dtime}}
\newcommand{\effset}{X_E}
\newcommand{\simplex}{\mathcal{W}_{\numagents}}
\newcommand{\psimplex}{\simplex^+}
\newcommand{\woptset}{S(\sw)}
\newcommand{\lconst}{\alpha_{\dtime}}
\newcommand{\dtime}{t}
\newcommand{\respi}{\var^i_{\dtime}}
\newcommand{\nrespi}{\tilde{\var}^i_{\dtime}}
\newcommand{\noise}{\epsilon}
\newcommand{\noisei}{\noise^i}
\newcommand{\ndistr}{\Lambda}
\newcommand{\sw}{\mu}
\newcommand{\var}{\beta}
\newcommand{\varti}{\var_t^i}
\newcommand{\dimn}{N}
\newcommand{\statei}{x}
\newcommand{\measi}{y^i}
\newcommand{\snoisei}{w}
\newcommand{\mnoisei}{v^i}
\newcommand{\stime}{t} 
\newcommand{\ftime}{k} 
\newcommand{\sdim}{q} 
\newcommand{\mdim}{p} 
\newcommand{\sncov}{Q_{\stime}(\lconst)}
\newcommand{\argo}{\beta}
\newcommand{\na}{M}
\newcommand{\kstate}{\hat{x}} 
\newcommand{\kcov}{\Sigma} 
\newcommand{\ARE}{\mathcal{A}(\lconst,\varti,\kcov)}
\newcommand{\hu}{\hat{u}}
\newcommand{\hlam}{\hat{\lambda}}
\newcommand{\hb}{\hat{\beta}}
\newcommand{\wass}{\mathcal{W}}
\newcommand{\CE}{\mathbb{E}}
\newcommand{\CV}{\mathcal{V}}
\newcommand{\bv}{\boldsymbol{v}}
\newcommand{\util}{f}
\begin{document}

\title{Multi-Agent Inverse RL for Identifying Pareto-Efficient Coordination -- Distributionally Robust Approach} 

\author{Luke Snow}
\affil{Department of Electrical \& Computer Engineering, Cornell University, Ithaca, NY, USA} 

\author{Vikram Krishnamurthy}
\member{Fellow, IEEE}
\affil{Department of Electrical \& Computer Engineering, Cornell University, Ithaca, NY, USA }

\receiveddate{ This research was supported by the Army Research Office grant W911NF-24-1-0083 and National Science Foundation grant CCF-2312198.}

\authoraddress{Author emails are as follows. Luke Snow: las474@cornell.edu, Vikram Krishnamurthy: vikramk@cornell.edu}

\maketitle

\begin{abstract} Multi-agent inverse reinforcement learning (IRL) aims to identify Pareto-efficient behavior in a multi-agent system, and reconstruct utility functions of the individual agents. Motivated by the problem of detecting UAV coordination, how can we construct a statistical detector for Pareto-efficient behavior given noisy measurements of the decisions of a multi-agent system?
This paper approaches this IRL problem by deriving necessary and sufficient conditions for a dataset of multi-agent system dynamics to be consistent with Pareto-efficient coordination, and providing algorithms for recovering utility functions which are consistent with the system dynamics. We derive an optimal statistical detector for determining Pareto-efficient coordination from noisy system measurements, which minimizes Type-I statistical detection error. Then, we provide a utility estimation algorithm which minimizes the worst-case estimation error over a statistical ambiguity set centered at empirical observations; this min-max solution achieves \textit{distributionally robust} IRL, which is crucial in adversarial strategic interactions. We illustrate these results in a detailed example for detecting Pareto-efficient coordination among multiple UAVs given noisy measurement recorded at a radar. We then reconstruct the  utility functions of the UAVs in a distributionally robust sense.
\end{abstract}

\begin{IEEEkeywords}
Multi-Objective Optimization, Inverse Reinforcement Learning, Statistical Detection, Multi-UAV System, Distributionally Robust Optimization, Pareto-efficient Coordination, Electronic Warfare
\end{IEEEkeywords}

\section{Introduction}

In strategic environments, autonomous dynamical systems such as UAVs are now  ubiquitous for reconnaissance, surveillance, and combative purposes. Often such autonomous systems are deployed in groups, e.g., UAV swarms, in order to collect information more efficiently or to multiply the combative force \cite{zhou2020uav}, \cite{nawaz2021uav} . Furthermore, these multi-agent intelligent systems typically have sophisticated sensors and communication capabilities which allow them to respond in real-time to an adversary's probe, e.g., radar tracking signals \cite{pace2009detecting}. This results in a strategic interaction between the multi-agent system and the adversary; the study of this interaction at the physical layer, for instance analyzing electromagnetic suppression techniques, is referred to as \textit{electronic warfare} \cite{spezio2002electronic}. 

We consider a multi-agent strategic interaction scenario between a sensor and a multi-agent system. Multi-agent inverse reinforcement learning (IRL) aims to identify Pareto-efficient behavior in a multi-agent system, and reconstruct utility functions of the individual agents. We take the perspective of an analyst observing the behavior of the system, and address two  questions: (i) How can we detect \textit{Pareto-efficient coordination} in the multi-agent system? (ii) How can we {\em reconstruct} individual utility functions which induce the observed aggregate behavior? 
Addressing these questions  allows us to understand the functionality of the multi-agent system, and also robustly predict its future behavior.\footnote{This paper substantially extends our preliminary work  in conference papers \cite{snow2022identifying} and \cite{snow2023statistical}. Specifically, the current paper  extends from the setting of Pareto-efficiency among cognitive radars to the full generality of a multi-agent system; we also reveal how our techniques can apply to a Pareto-efficient multi-UAV system. This work also derives a novel distributionally robust utility reconstruction procedure in this general setting, which is absent in the previous works. Furthermore, we provide extended discussion, motivation and numerical simulations.}

\begin{figure}
\centering
  \includegraphics[width=0.8\linewidth,scale=0.25]{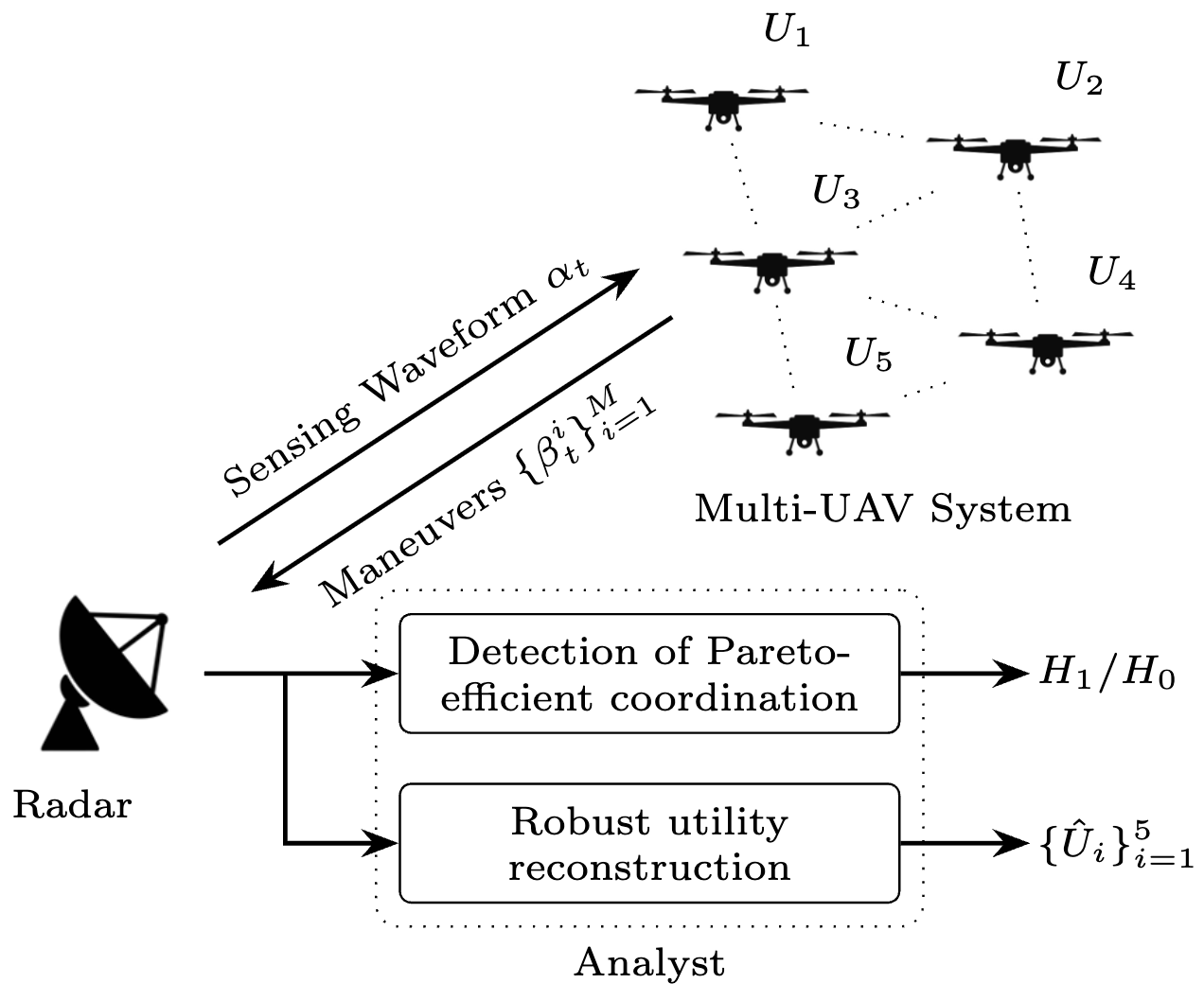}
  \caption{Radar -- Multi-UAV Interaction. We represent the high-level radar tracking waveform (parameters) by $\alpha_t$, and the target network maneuvers by $\{\beta_t^i\}_{i\in[M]}$. The analyst observes the dataset $\dataset = \{\alpha_t, \{\beta_t^i\}_{i=1}^{\na}, t\in[0,T]\}$ and has two goals. First, to determine whether the dataset is consistent with Pareto-efficient coordination. Second, to produce distributionally robust estimates of the utility functions driving the system.}
  \label{fig:interaction}
\end{figure}

We study this problem at a higher level of abstraction than traditional electronic warfare investigations; this allows us to formulate the Pareto-efficient coordination problem as a general linearly-constrained multi-objective optimization. Then, the problem of detecting Pareto-efficient coordination and reconstructing feasible objective functions becomes that of \textit{inverse multi-objective optimization}, which we approach using tools from microeconomic theory. Specifically, the theory of revealed preferences \cite{varian2006revealed}, \cite{chambers2016revealed} provides a principled methodology for approaching this inverse multi-objective optimization. We also show how this framework arises naturally from physical-layer considerations such as radar waveform modulation and multi-target filtering algorithms, in an extended example on multi-UAV Pareto-efficient coordination detection. Figure~\ref{fig:interaction} illustrates the radar–multi-UAV interaction: an analyst observes sensor data and system dynamics to test for coordination and reconstruct utility functions. This is made possible by an equivalence between our microeconomic revealed preference framework and multi-target filtering dynamics (notably the joint probabilistic data association filter), providing an analytical bridge to advanced tracking algorithms\footnote{This analytical bridge is detailed in Section~\ref{sec:uavcoord}.}. However, our revealed preference framework itself is \textit{independent} of the radar–UAV setting and applies \textit{broadly} to multi-agent systems.


\subsection{Main Results and Organization}

Our main contributions are summarized as follows:

\begin{itemize}
    \item[-] \textit{Step 1A: Conditions for Testing Pareto-Efficiency}. 
    Our first result in multi-agent IRL provides necessary and sufficient conditions for a dataset of observed system dynamics to be \textit{consistent} with Pareto-efficiency. From these conditions, we also derive a method for constructing utility functions with respect to which the observed data is Pareto-efficient, thereby generalizing Afriat's seminal theorem in microeconomic revealed preference theory \cite{afriat1967construction} to the multi-agent regime. If the dataset is not consistent with Pareto-efficiency, we quantify its proximity to such satisfaction, i.e., “how close” the system is to Pareto-efficient coordination. This near-optimality testing is crucial for our subsequent developments in detection and robust utility reconstruction. 
    
    \item[-] \textit{Step 1B: Optimal Statistical Detection}. 
    The conditions of Step 1A apply to deterministic data. To handle noisy or suboptimal observations, we derive a statistical detector for Pareto-efficiency hypothesis testing, and prove several optimality guarantees, including minimization of Type-I error. Beyond detection, the conditions naturally yield a method for predicting future system behavior.
    
    \item[-] \textit{Step 2: Distributionally Robust Utility Estimation}. 
    When Pareto-efficient behavior is measured with noise or is only approximately optimal, classical reconstruction methods fail to provide accurate utilities. We develop a distributionally robust utility estimation algorithm that minimizes the worst-case reconstruction error. In adversarial settings, bounding this worst-case error is critical since a single misleading estimate may compromise detection, tracking, or prediction. We use  techniques in distributionally robust optimization \cite{rahimian2019distributionally}, \cite{rahimian2022frameworks}, \cite{kuhn2019wasserstein}, which has emerged as the state-of-the-art framework for obtaining robust solutions under statistical ambiguity. To our knowledge, this is the first application of distributionally robust optimization to achieve reliable multi-agent IRL. 
\end{itemize}

{\em Organization}. 
Section~\ref{sec:MOO} provides background on multi-objective optimization and Pareto efficiency in multi-agent systems. 
Section~\ref{sec:MOO_detector} develops the necessary and sufficient conditions for Pareto-efficient coordination and the optimal statistical detector. 
Section~\ref{sec:robust} presents our distributionally robust utility reconstruction procedure. 
Section~\ref{sec:uavcoord} applies our framework to a multi-UAV coordination problem using radar tracking signals, demonstrating its relation to waveform modulation and filtering algorithms. 
Section~\ref{sec:numeric} provides illustrative numerical simulations.

\vspace{-0.2cm}

\subsection{Related Works and Context}

\textit{Why Pareto-Efficiency as Model for Coordination?}
Pareto-efficiency is a widely utilized formalization for coordination in multi-agent systems \cite{snow2023statistical}, \cite{wang2020thirty}, \cite{wismans2014pruning}, \cite{liao2022energy}, \cite{sabino2018topology}, \cite{gao2018multi}, \cite{gomez2020pareto}, \cite{marden2014achieving}, \cite{ruadulescu2020multi} since it captures optimal resource allocation under heterogeneous objectives. In multi-agent systems, agents face shared constraints (e.g., power, bandwidth, detectability) while pursuing distinct goals. Pareto-efficiency guarantees \emph{global rationality}: no agent can improve without harming another. This makes it the weakest possible useful notion of collective optimality—less restrictive than requiring agreement on a common utility, yet strong enough to exclude inefficient or wasteful joint actions. This in turn motivates statistical detection of Pareto efficient coordination.


\textit{Adversarial Intent Detection via Revealed Preferences.} This work extends a line of research in using revealed preference techniques for learning in adversarial sensing settings \cite{krishnamurthy2020identifying}, \cite{pattanayak2022meta}, \cite{pattanayak2022can}. The methods introduced in this paper extend these techniques to detect \textit{Pareto-efficiency in multi-agent systems}, and to reconstruct utility functions in a \textit{distributionally robust} manner. In short, these contributions enable effective \textit{group-intent recognition} using these tools. 

\textit{Role of Deep Neural Network (DNN) Methods.}
We focus  on detecting Pareto efficient behavior and reconstruction of utility functions with provable performance guarantees (necessary and sufficient conditions). 
Recent works  that use DNN  architectures for reward learning from dynamical system observations (see \cite{wulfmeier2015maximum}, \cite{wulfmeier2017large})  can be used in conjunction with our methods in two ways \footnote{A companion paper \cite{zhang2025infer} studies group intent as the outcome of a cooperative game that modulates the probabilities in the target dynamics. The paper then estimates the utility using graph based neural networks.}: First,
deep auto-encoders can be used as a pre-processing step  to map a dataset to a lower dimensional feature
dataset. Second,  DNNs can be  used as a functional approximator to learn a utility. 
However, note 
that our proposed methods operate effectively on  small datasets of system dynamics, 
whereas DNN  methods often require  large datasets to be effective.

\section{Background. Constrained Multi-Objective Optimization in Coordinated Systems}
\label{sec:MOO}
Since our goal is to detect Pareto-efficient coordination, in this section
 we provide a brief background to  multi-objective optimization and Pareto-efficiency.
 
\subsection{Multi-Objective Optimization}

We  define a {\em coordinating} multi-agent system as one which  satisfies  multi-objective optimality, where each agent $i$ has a distinct objective function $f^i$. This is quantified as the following constrained multi-objective optimization:

\begin{align}
\begin{split}
\label{eq:MOP}
 &\arg \max _\var \{f^1(\var),\dots,f^{\numagents}(\var)\}\ \\& s.t. \ \var \in \conset := \{\var \in \reals^n  : \alpha_t^\top \var \leq 1 \}
 \end{split}
\end{align}
where the linear constraint\footnote{For vector $x$ we let $x^\top $ represent the transpose of $x$.} $\alpha_t^\top  \var$ is bounded by 1 without loss of generality (see Sec. I-A of \cite{krishnamurthy2020identifying}). In single-objective optimization, the goal is to find the best feasible argument which maximizes the objective. However, in multi-objective optimization there will seldom exist an argument $\var$ which simultaneously maximizes all objectives, i.e., there will be trade-offs between objectives for varying argument $\var$. 

A fundamental solution concept for the multi-objective optimization problem \eqref{eq:MOP} is that of Pareto efficiency:
\begin{definition}[Pareto-Efficiency]
\label{def:par_opt}
 For fixed $\{\{f^i(\cdot)\}_{i=1}^{\numagents},\lconst\}$ and a vector $\var^*\in \conset$, let 
\begin{align*}
\begin{split}
    &Z^t(\var^*) = \{\var \in \conset : f^i(\var) \geq f^i(\var^*) \ \forall i \in [\numagents]\} \\
    &Y^t(\var^*) = \{\var \in \conset : \exists k : f^k(\var) > f^k(\var^*) \}
\end{split}
\end{align*}
The vector $\var^*$ is said to be Pareto-efficient if 
\begin{equation}
\label{efficiency}
Z^t(\var^*) \cap Y^t(\var^*) = \emptyset
\end{equation}
where $\emptyset$ is the empty set. Intuitively, a vector is Pareto-efficient if there does not exist another vector in the feasible set $\conset$ which increases the value of some objective $f^i(\cdot)$ without simultaneously decreasing the value of some other objective $f^j(\cdot)$, $i,j\in [\numagents]$.
\end{definition}
We then denote the set of all Pareto-efficient solutions to the problem \eqref{eq:MOP} as 
\begin{equation}
\label{eq:effset}
\effset(\{f^i\}_{i=1}^M, \lconst) := \{\var^* \in \conset : \eqref{efficiency} \textrm{ is satisfied}\}
\end{equation}
and we say that $\var^*$ solves \eqref{eq:MOP} if and only if $\var^*$ is Pareto-efficient, i.e.,
\begin{align*}
   &\var^* \in \{ \arg \max_\var \{f^1(\var),\dots,f^{\numagents}(\var)\}\ s.t. \ \var \in \conset \}\\
   &\Longleftrightarrow \ \var^* \in \effset(\{f^i\}_{i=1}^M, \lconst)
\end{align*}

$\effset(\{f^i\}_{i=1}^M, \lconst)$ is referred to as the "Pareto-frontier", and is a hypersurface in the ambient space with dimensionality $M$.
Denoting $\boldf(\var) = (f^1(\var),\dots,f^{\numagents}(\var))^\top $, we can use the following problem of weighted sum (PWS) \cite{gass1955computational} to obtain a Pareto-efficient solution:
\begin{equation}
\label{eq:PWS}
\max \ \sw^\top \boldf(\var) \ s.t. \ \var \in \conset
\end{equation}
where $\sw = (\sw^1,\dots,\sw^{\numagents})^\top  \in \reals^{\numagents}_+$. The set of weights $\sw$ is restricted to the non-negative unit simplex, denoted as $\simplex := \{\sw \in \reals^{\numagents}_+ : \boldsymbol{1}^\top \sw = 1\}$. Denote the set of optimal solutions for \eqref{eq:PWS} as
\begin{equation}
\label{eq:woptset}
\woptset = \arg \max_\var\{\sw\top \boldf(\var) : \var \in \conset\}
\end{equation}
Then, letting $\psimplex = \{\sw \in \reals^{\numagents}_{++} : \boldsymbol{1}\top \sw = 1\}$ denote the unit simplex with each weight $\sw^i$ strictly positive, we have \cite{miettinen1999nonlinear}:

\begin{equation}
\label{eq:effrel}
\bigcup_{\sw \in \psimplex}\woptset \subseteq \effset(\{f^i\}_{i=1}^M, \lconst) \subseteq \bigcup_{\sw \in \simplex} \woptset
\end{equation}
where the right-most inclusion holds only if the objective functions are concave and the feasible set is convex \cite{miettinen1999nonlinear}. In words, all optimal solutions $S(\mu)$ for $\mu\in\psimplex$ are Pareto-efficient, and under concave objective functions and convex constraint sets all Pareto-efficient solutions can be obtained by solving $S(\mu)$ with some $\mu\in\simplex$.

Figure~\ref{fig:pareto} illustrates the concept of Pareto-efficiency in the objective space $(f_1(\beta),f_2(\beta))$. The shaded region corresponds to the feasible image $\boldf(X_t^c)$ defined by the constraint set $X_t^c=\{\beta:\alpha_t^\top\beta \leq 1\}$. Any point in the interior is \emph{dominated}: meaning there exists another feasible allocation which increases one objective without reducing the other. In contrast, points on the upper-right boundary (the Pareto frontier) satisfy Definition~\ref{def:par_opt}: no feasible $\beta$ can improve one objective $f^i(\beta)$ without decreasing another $f^j(\beta)$. Weighted-sum optimization~\eqref{eq:woptset} selects among these frontier points, with different choices of $\mu$ tracing out the set of efficient allocations. 
\begin{figure}[t]
\centering
\begin{tikzpicture}[>=Stealth,scale=0.5]

\draw[->] (0,0) -- (8.6,0) node[below,right] {$f_1(\beta)$};
\draw[->] (0,0) -- (0,8.6) node[left] {$f_2(\beta)$};

\def\feas{(0.9,1.1) .. controls (3.2,0.6) and (6.4,1.7) .. (7.25,3.2)
          .. controls (7.6,4.4) and (5.9,6.3) .. (3.6,7.5)
          .. controls (2.3,7.8) and (1.1,6.5) .. (0.9,1.1)}
\fill[blue!7] \feas -- cycle;
\node[blue!60!black] at (3.2,5.25) {\small Feasible image $\boldf(X_t^{c})$};

\draw[line width=1.4pt,red!70!black]
      (7.25,3.2) .. controls (7.6,4.4) and (5.9,6.3) .. (3.6,7.5)
      node[pos=0.5,above right,xshift=0.6ex] {\small Pareto frontier};

\draw[->,gray!70,thick] (1.0,0.8) -- (3,2.6);
\node[gray!70,above right] at (2.2,1.6) {\scriptsize better};

\coordinate (A) at (3,4);
\filldraw[black] (A) circle (2pt);
\node[above right] at (A) {\small dominated};

\coordinate (B) at (5.4,6.3); 
\coordinate (C) at (4.45,7.0); 
\coordinate (D) at (7.225,4); 
\filldraw[black] (B) circle (2.2pt);
\filldraw[black] (C) circle (2.2pt);
\filldraw[black] (D) circle (2.2pt);
\node[above right] at (B) {\small efficient};
\node[above right] at (C) {\small efficient};
\node[above right] at (D) {\small efficient};


\foreach \x in {2,4,6,8} \draw[gray!30] (\x,0) -- (\x,-0.08) node[below,black!60] {\scriptsize \x};
\foreach \y in {2,4,6,8} \draw[gray!30] (0,\y) -- (-0.08,\y) node[left,black!60] {\scriptsize \y};

\end{tikzpicture}
\caption{Pareto-efficiency in objective space for a 2-objective optimization. The interior (blue) is the feasible image $f(X_t^c)$. Points on the upper boundary (red) are Pareto-efficient: improving one objective necessarily worsens the other. Black points on the frontier illustrate efficient solutions, while any points in the interior are dominated. The  aim of this paper is to \textit{detect} Pareto-efficiency given a dataset from a  multi-agent system.}
\label{fig:pareto}
\end{figure}
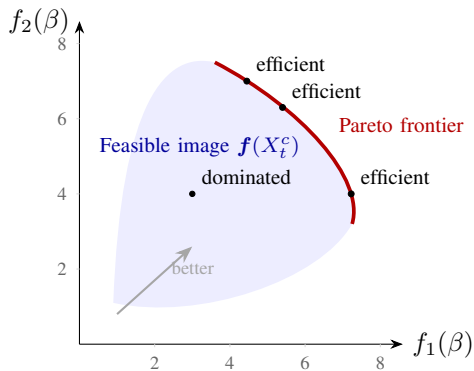

Next we will detail how Pareto-efficiency can be defined as the optimality condition of a coordinated multi-agent system. We then turn to our methodology for \textit{detecting} coordination from observed system behavior.

\subsection{Coordinated Multi-Agent Systems}
\label{sec:radarnet}
Here we specify a general dynamical systems model of our multi-agent system. We detail how one can derive a constrained multi-objective optimization as the natural mathematical formalism defining multi-agent coordination.

We consider the interaction between a sensor and a multi-agent system comprising $\na$ heterogeneous agents. The agents evolve according to a state-space model, and the sensor records noisy observations of each agent's state.


\begin{definition}[Multi-Agent System]
\label{inter_dynamics}
We introduce the following state-space sensing dynamics:
\begin{align}
    \begin{split}
    \label{eq:state_dyn}
        \textrm{agent $i$ state}: x_{t} &\in \reals^{\mdim}, \
        x_t \sim p_{\varti}(x|x_{t-1}) \\
        \textrm{agent $i$ parameter} : \varti &\in \reals^{\dimn}_+, \, i\in[\na]\\
        \textrm{sensor output}: \lconst &\in \reals^{\dimn}_+ \\
        \textrm{sensor measurement}: \measi_{\ftime} &\in \reals^{\mdim}, \
        \measi_{\ftime} \sim p_{\lconst}(y|x_{\ftime}^i) \\
    \end{split}
\end{align}
\end{definition}
\noindent 
Here $[x]$ denotes the set $\{1,\dots,x\}$. Each agent $i$ has utility function $\util^i: \reals^{\dimn}_{+} \mapsto \reals$, quantifying its objective, and may adjust its dynamics through parameter $\beta_t^i$ to achieve its objective. In a \textit{coordinated} multi-agent system, the individual dynamics $\beta_t^i$ are coupled, and the group behaves in order to maximize the aggregate utility:
\begin{definition}[Coordinated System]
\label{def:coord}
We define a coordinating system to be a group of $\na$ agents, each with individual concave, continuous and monotone increasing objective functions\footnote{This structure is widely assumed, and is natural in microeconomic and electronic warfare settings \cite{krishnamurthy2020identifying}} $\util^i: \reals^{\dimn} \to \reals, i\in[\na]$, which act according to dynamics parametrizations $\{\varti\}_{i=1}^{\na}$ such that\footnote{The constraint bound 1 is without loss of generality, see \cite{krishnamurthy2020identifying}.}
 \begin{gather}
\begin{aligned}
\label{def:coord_eq}
    \begin{split}
        \{\varti\}_{i=1}^{\na} \in \arg\max_{\{\argo^i\}_{i=1}^{\na}} \sum_{i=1}^{\na}\mu^i\util^i(\argo^i) \,\, s.t. \,\, \alpha_t^\top  (\sum_{i=1}^{\na} \argo^i) \leq 1
    \end{split}
\end{aligned}\raisetag{2.4\baselineskip}\end{gather}
for a set of weights $\mu^i > 0$.
\end{definition}

Observe that \eqref{def:coord_eq} is exactly 
$\{\beta_t^i\}_{i=1}^{\na} \in S(\mu)$ for constraint set $X_c^t = \{\{\beta_i\}_{i=1}^{\na} : \alpha_t^\top \sum_{i=1}^{\na}\beta_i \leq 1\}$, and thus \eqref{def:coord_eq} is a special case of \eqref{eq:MOP}.  
Thus, a group which acts according to \eqref{def:coord_eq} optimally (is \textit{Pareto-efficient}) parametrizes its state kernels $p_{\beta_t^i}(y|x_t)$ subject to each objective function, the sensing dynamics of the sensor, and a constraint on the joint agent dynamics. This joint constraint $\alpha_t^\top (\sum_{i=1}^{\na}\beta^i) \leq 1$ arises naturally from physical-layer considerations of, e.g., cognitive radar network power constraints \cite{snow2023statistical} or UAV network detectability. Indeed, we will illustrate the latter in detail in an extended example in Section~\ref{sec:uavcoord}. For now we simply motivate this constraint as a way of coupling the agent dynamics within a joint optimization. 

\textit{Detecting Pareto-Efficiency}: In the following sections we take the perspective of an analyst who observes the behavior of the system. We construct algorithms to determine if the system coordinates, that is, if its dynamics satisfy \eqref{def:coord_eq}. We provide necessary and sufficient conditions for a dataset of observations to be consistent with such Pareto-efficiency, and derive an optimal statistical detection scheme for achieving this in noise. Furthermore, in Section~\ref{sec:robust} we provide a distributionally robust technique for estimation of the utility functions $\{f^i\}_{i=1}^{\na}$. 


\subsection{Detecting Pareto Efficiency.  Examples of Multi-Agent Systems}

The methods that we will develop in subsequent sections  apply to detecting Pareto efficiency for \textit{any} multi-agent coordinated system which has dynamics \eqref{eq:state_dyn} and can be formalized as performing the jointly constrained optimization \eqref{def:coord_eq}.
In Section~\ref{sec:uavcoord} we will  provide an extended example of how the dynamics of a coordinated multi-UAV system naturally gives rise to the abstracted mathematical form of \eqref{def:coord_eq}.  We now  briefly discuss  examples of other multi-agent systems fitting this framework, where an analyst may wish to determine whether observed behavior is consistent with coordination. 

\textit{Detecting Pareto-Efficient Spectrum Allocation.} In multi-cell 5G/6G systems, base stations allocate spectrum and transmit power while balancing throughput, latency, and fairness \cite{wang2020thirty}. An analyst observing network scheduling or beamforming could test whether these allocations are Pareto-efficient subject to spectrum interference constraints $\sum_i \alpha_t^\top  \beta_i \leq 1$, and, upon confirmation, reconstruct implicit utility weights that reveal operator priorities.

\textit{Autonomous Vehicle Teaming.} Vehicles in a team jointly adjust velocity and spacing to minimize energy use, maximize throughput, and ensure safety \cite{wang2020cav}. From traffic flow data, one may detect whether the platoon acts in a coordinated, Pareto-efficient manner under safety-distance constraints, and recover utility functions that quantify the relative emphasis on efficiency versus safety.

\textit{Smart Grid Dispatch.} Distributed generators and storage units schedule electricity production subject to supply-demand balance and transmission limits \cite{revathi2025multi}. Observing dispatch decisions, an analyst could test for coordination (Pareto-efficient resource allocation) and reconstruct cost or renewable-integration utilities, enabling prediction of grid responses to shocks or adversarial interventions.


Thus, the methodology developed here applies beyond our illustrative example of multi-UAV systems, to any system where multiple agents optimize distinct utilities under a shared linear constraint.

\begin{figure*}[t]
\centering
\begin{tikzpicture}[
  >=Latex,
  node distance=1.3cm and 1.6cm,
  box/.style={draw, rounded corners, align=center, inner sep=4pt, fill=gray!4, font=\scriptsize}
]

\node[box, minimum width=2.3cm] (probe) {Sensor Probe \\ $\lconst$};
\node[box, right=of probe, minimum width=3.4cm] (system)
  {Multi-agent system \\ (coordination) \\[2pt]
   $\{\respi\}\in\arg\max \sum_i \sw^i U^i(\beta_i)$ \\
   s.t. $\alpha_t^\top \sum_{i=1}^{\na}\beta_i \leq 1$};
\node[box, right=of system, minimum width=2.7cm] (analyst)
  {Analyst observes \\ $(\lconst,\{\respi\})$};
\node[box, right=of analyst, minimum width=3.2cm] (LP)
  {LP feasibility test \\ eq.~\eqref{af_ineq}};

\draw[->, thick] (probe) -- (system) node[midway,above] {\scriptsize constraint};
\draw[->, thick] (system) -- (analyst);
\draw[->, thick] (analyst) -- (LP) node[midway,above] {\scriptsize dataset $\dataset$};

\end{tikzpicture}
\caption{Probe–response revealed preference framework. The environment issues a probe $\lconst$, 
the system responds optimally with $\{\respi\}$, and the analyst observes $(\lconst,\{\respi\})$. 
Testing coordination reduces to checking LP feasibility \eqref{af_ineq}.}
\label{fig:inter_dynam}
\end{figure*}
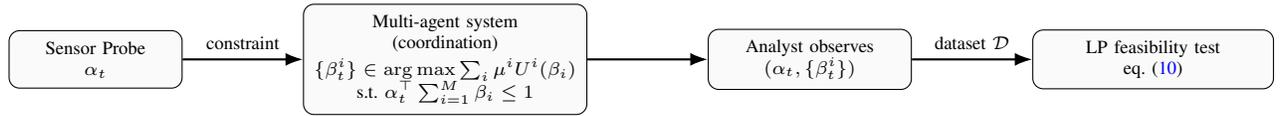

\section{Result 1. Statistical Detection of Pareto-Efficient Coordination} 
\label{sec:MOO_detector}


Recall our main results in multi-agent IRL constitute a two-step procedure: detecting Pareto efficient coordination  in a multi-agent system and subsequently reconstructing the utilities of the agents. This section details the first step; we provide an optimal statistical detector for determining if a noisy dataset of  dynamics from a multi-agent system are consistent with Pareto-efficient coordination. We first provide an equivalence between \textit{deterministic} multi-agent system coordination (Def.~\ref{def:coord}) and the feasibility of a testable linear program; allowing us to test coordination by solving this linear program, from deterministic system observations. Corollary~\ref{cor:Utrec} provides a mechanism for reconstructing feasible utility functions from this \textit{deterministic} data, which allows for understanding the goals of the system and predicting future dynamics. We then extend these results to the stochastic regime by providing a statistical detector for determining whether \textit{noisy} measurements of system dynamics are consistent with multi-objective optimization (coordination). \footnote{If we concretize to a radar-UAV interaction, we assume the target can observe the signals $\{\varti, t\in[T]\}_{i=1}^{\numagents}$ through e.g., an omni-directional receiver or by radar tracking signals. See \cite{pace2009detecting} for physical-layer considerations of radar waveform observation, detection, and classification.}


\subsection{Main Result 1A. Linear Programming Formulation for Detection of Deterministic Pareto-Efficient Coordination.} 

We model the interaction between an evolving constraint (or probe) $\alpha_t$ and the resulting multi-agent system response $\{\beta_t^i\}_{i=1}^M$. 
If the agents are coordinating, each probe–response pair $(\alpha_t,\{\beta_t^i\}_{i=1}^M)$ arises from a linearly constrained multi-objective optimization \eqref{def:coord_eq}, thereby revealing the agents’ underlying objectives. 
The analyst observes a dataset \[\mathcal{D}=\{\,\alpha_t,\{\beta_t^i\}_{i=1}^M : t\in[T]\}\] and seeks to determine whether it is \emph{rationalizable}, i.e., whether there exist utility functions $\{U^i\}_{i=1}^M$ under which all observations are consistent with Pareto-efficient coordination. 
This corresponds to the classical revealed preference problem \cite{afriat1967construction,mcfadden2006revealed,cherchye2011revealed}, and leads to our first main result: a linear program that tests rationalizability and reconstructs consistent utilities, which we later extend to noisy observations via an optimal statistical detector. 

\subsubsection{Testing Pareto-Efficiency}
Specifically, here we provide a necessary and sufficient condition for the dataset $\dataset$ to be consistent with multi-objective optimization.
\begin{theorem}
 \label{thm:cherchye1}
    Let $\dataset$ be a set of observations. The following are equivalent:
    \begin{enumerate}
    \item there exist a set of $M$ concave and continuous objective functions $U^1,\dots,U^m$ and weights $\sw \in \psimplex$ such that $\forall t \in [T]$:
    \begin{align}
    \begin{split}
    \label{thm1:rat}
        \{\respi\}_{i=1}^{\numagents} \in &\arg\max_{\{\argo^i\}_{i=1}^{\numagents}} \sum_{i=1}^{\numagents} \sw^i U^i(\argo^i) \ \  \\& s.t. \ \alpha_t^\top  (\sum_{i=1}^{\numagents}\argo^i ) \leq 1
    \end{split}
    \end{align}
    \item there exist numbers $u_j^i > 0, \lambda_j^i > 0$ such that for all $s,t \in [T]$, $i \in [M]$: 
    \begin{equation}
    \label{af_ineq}
        u_s^i - u_t^i - \lambda_t^i\alpha_t^\top [\var_s^i - \varti] \leq 0
    \end{equation}
    \end{enumerate}
\end{theorem}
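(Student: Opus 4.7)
The plan is to prove Theorem~\ref{thm:cherchye1} as a multi-agent extension of Afriat's classical revealed-preference characterization, via the familiar two-direction scheme. The necessary direction (1)$\Rightarrow$(2) proceeds by KKT and supergradient arguments, while the sufficient direction (2)$\Rightarrow$(1) is constructive, using an Afriat-type lower envelope.

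For (1)$\Rightarrow$(2), I would fix $t \in [T]$ and apply KKT to the joint concave program under the single linear constraint $\alpha_t^\top \sum_i \beta^i \leq 1$. Slater's condition holds (take $\beta^i = 0$), so stationarity produces a single shared non-negative multiplier $\lambda_t$ satisfying $\mu^i \nabla U^i(\beta_t^i) = \lambda_t \alpha_t$ for every $i \in [M]$. The supergradient inequality for concave $U^i$, namely $U^i(\beta_s^i) - U^i(\beta_t^i) \leq \nabla U^i(\beta_t^i)^\top (\beta_s^i - \beta_t^i)$, multiplied through by $\mu^i > 0$ and combined with stationarity, yields
\[
\mu^i U^i(\beta_s^i) - \mu^i U^i(\beta_t^i) \leq \lambda_t \alpha_t^\top (\beta_s^i - \beta_t^i).
\]
Setting $u_t^i := \mu^i U^i(\beta_t^i) + C$ for $C$ large enough to ensure positivity, and $\lambda_t^i := \lambda_t$, gives \eqref{af_ineq}.

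For (2)$\Rightarrow$(1), I would construct the utilities explicitly via the Afriat lower envelope
\[
U^i(\beta) := \min_{s \in [T]} \bigl\{\, u_s^i + \lambda_s^i \alpha_s^\top (\beta - \beta_s^i) \,\bigr\},
\]
which is concave and continuous as a pointwise minimum of affine functions, and coordinatewise monotone since each affine piece has gradient $\lambda_s^i \alpha_s \geq 0$. Inequality \eqref{af_ineq} ensures $U^i(\beta_t^i) = u_t^i$, with the minimum attained at $s = t$. To certify joint optimality at time $t$ for some $\mu \in \psimplex$, I would use the envelope $U^i(\tilde\beta^i) \leq u_t^i + \lambda_t^i \alpha_t^\top (\tilde\beta^i - \beta_t^i)$ against any feasible alternative $\{\tilde\beta^i\}$ with $\alpha_t^\top \sum_i \tilde\beta^i \leq 1$, giving
\[
\sum_i \mu^i U^i(\tilde\beta^i) \leq \sum_i \mu^i u_t^i + \sum_i \mu^i \lambda_t^i \alpha_t^\top (\tilde\beta^i - \beta_t^i),
\]
and the remaining task is to ensure the cross-agent residual is non-positive.

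The principal obstacle, and the genuinely multi-agent step, is the selection of $\mu \in \psimplex$: the residual above factors through the non-positive slack $\alpha_t^\top \sum_i (\tilde\beta^i - \beta_t^i) \leq 0$ (using monotonicity to force $\alpha_t^\top \sum_i \beta_t^i = 1$ at the observed responses) only when $\mu^i \lambda_t^i$ is independent of $i$ at each $t$, matching the single joint KKT multiplier. I would exploit the scaling freedom $(u_t^i, \lambda_t^i) \mapsto (c^i u_t^i, c^i \lambda_t^i)$, under which \eqref{af_ineq} is invariant and each $U^i$ is simply rescaled, together with the fact that the common probe $\alpha_t$ appears in every agent's inequality, to align the per-agent multipliers into the required form. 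Showing that this alignment is always achievable from the raw Afriat data, and that the chosen $\mu$ works uniformly across $t \in [T]$, is the delicate piece that distinguishes the multi-agent argument from its single-agent antecedent.
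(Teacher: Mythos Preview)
The paper does not include a self-contained proof of Theorem~\ref{thm:cherchye1}; it simply cites Theorem~1 of the earlier conference paper. So there is no in-paper argument to compare your outline against directly. That said, your outline has a genuine gap in the sufficiency direction.

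Your $(1)\Rightarrow(2)$ sketch is fine modulo two small caveats: replace gradients by supergradients throughout (the $U^i$ are only assumed concave and continuous), and note that $\lambda_t>0$ requires the budget to bind, which follows from monotonicity (present in the paper's standing assumptions but not stated in the theorem itself).

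The real issue is in $(2)\Rightarrow(1)$. You correctly identify that the residual $\sum_i \mu^i \lambda_t^i\,\alpha_t^\top(\tilde\beta^i-\beta_t^i)$ collapses to a nonpositive slack only when $\mu^i\lambda_t^i$ is independent of $i$ for every $t$, equivalently when the time-ratios $\lambda_s^i/\lambda_t^i$ are common across agents. But the only tool you propose---the agent-wise rescaling $(u_t^i,\lambda_t^i)\mapsto(c^i u_t^i,c^i\lambda_t^i)$---leaves every ratio $\lambda_s^i/\lambda_t^i$ \emph{invariant}. So if the given Afriat solution has, say, $\lambda_2^1/\lambda_1^1\neq\lambda_2^2/\lambda_1^2$, no choice of $c^1,c^2$ or of $\mu$ will repair it. The observation that ``the common probe $\alpha_t$ appears in every agent's inequality'' is suggestive but does not by itself deliver the alignment.

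What is actually needed is to argue that feasibility of \eqref{af_ineq} with arbitrary $\lambda_t^i$ implies feasibility with $\lambda_t^i$ of the special form $\lambda_t/\mu^i$ for some common $\lambda_t$ and fixed $\mu$; equivalently, that one can re-solve the Afriat system, not merely rescale a given solution, so that the required cross-agent structure holds. This uses the full freedom of the Afriat polyhedron (a cone, not a one-parameter family) and is precisely the ``delicate piece'' you flag but do not supply. Without it, the construction exhibits utilities for which $\{\beta_t^i\}$ is Pareto-efficient at each $t$ with possibly time-varying weights, but not the time-invariant $\mu\in\psimplex$ the theorem demands.
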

\begin{proof}
See Theorem 1 of \cite{snow2022identifying}
\end{proof}

The equivalence in Theorem~\ref{thm:cherchye1} provides a tractable test of coordination
rationalizability. Rather than searching over unknown utilities to determine consistency, one can check the 
feasibility of a system of linear inequalities. This is the multi-agent generalization 
of Afriat’s theorem \cite{afriat1967construction}, which is a seminal result in the theory of microeconomic revealed preferences. Afriat’s theorem provides a nonparametric constructive test to identify if an agent is a
budget-constrained utility maximizer. A remarkable feature of Afriat’s theorem is that if the
dataset can be rationalized by a utility function, then it can be rationalized by a continuous,
concave, monotonic utility function. That is, violations of continuity, concavity, or monotonicity
cannot be detected with a finite number of observations.

\subsubsection{Reconstructing Rationalizing Utility Functions}
Given feasibility (coordination), we can use the following Corollary to reconstruct objective functions which rationalize the observed responses. 

\begin{corollary}
\label{cor:Utrec}
Given constants $u_t^i, \lambda_t^i, t\in[T],i\in[M]$ which make \eqref{af_ineq} feasible, explicit monotone and continuous objective functions that "rationalize" the dataset \\ $\{\lconst, \respi, t \in [T], i \in [M]\}$ are given by
\begin{equation}
\label{eq:Utrec}
     U^i(\cdot) = \min_{t \in [T]} \left[u_t^i + \lambda_t^i\alpha_t^\top [\cdot - \respi] \right]
\end{equation}
i.e., \eqref{thm1:rat} is satisfied with objective functions \eqref{eq:Utrec}.
\end{corollary}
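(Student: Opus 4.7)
The plan is to verify two things about the utility $U^i$ defined by \eqref{eq:Utrec}: it has the regularity stipulated by Theorem~\ref{thm:cherchye1} (concavity, continuity, monotonicity), and the observed responses $\{\beta_t^i\}$ satisfy the weighted-sum optimality \eqref{thm1:rat} under an appropriate weight vector.

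First I would dispatch the regularity properties. As a pointwise minimum of a finite family of affine functions, $U^i$ is automatically concave and continuous. Monotonicity in each coordinate follows because the slope of the $t$-th affine piece is $\lambda_t^i \alpha_t$, with $\lambda_t^i > 0$ and the entries of $\alpha_t$ componentwise non-negative (since $\alpha_t$ encodes a resource or detectability budget).

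Next I would evaluate $U^i$ at the observed allocations. The $t$-th affine piece in \eqref{eq:Utrec} evaluated at $\beta = \beta_t^i$ equals $u_t^i$. For any other index $s$, the $s$-th piece contributes $u_s^i + \lambda_s^i \alpha_s^\top (\beta_t^i - \beta_s^i)$, which by \eqref{af_ineq} (applied with $s$ and $t$ interchanged) is no smaller than $u_t^i$. Hence the minimum is attained by the $t$-th piece, giving $U^i(\beta_t^i) = u_t^i$, and the bound
\[
U^i(\beta) \;\leq\; U^i(\beta_t^i) + \lambda_t^i \alpha_t^\top(\beta - \beta_t^i), \qquad \forall \beta,
\]
identifies $\lambda_t^i \alpha_t$ as a supergradient of the concave $U^i$ at $\beta_t^i$.

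The central step is to combine these supergradient inequalities to verify optimality for the joint program \eqref{thm1:rat}. Weighting the bounds by $\mu^i$ and summing, every feasible $\{\beta^i\}$ with $\alpha_t^\top \sum_i \beta^i \leq 1$ satisfies
\[
\sum_{i=1}^M \mu^i U^i(\beta^i) \;\leq\; \sum_i \mu^i U^i(\beta_t^i) + \sum_i \mu^i \lambda_t^i \alpha_t^\top(\beta^i - \beta_t^i).
\]
To make the right-most sum collapse through the joint budget, I would choose the weights so that $\mu^i \lambda_t^i$ is $i$-independent at each $t$, equal to a common multiplier $\nu_t > 0$; the natural choice $\mu^i \propto 1/\lambda_t^i$ (suitably normalized into $\psimplex$) achieves this pointwise, and when the same $\mu$ is required across all $t$ an outer rescaling of the Afriat constants from the construction behind Theorem~\ref{thm:cherchye1} supplies the needed factorization. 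With this coupling in hand, the remaining term becomes $\nu_t \alpha_t^\top(\sum_i \beta^i - \sum_i \beta_t^i)$, which is non-positive since monotonicity of each $U^i$ forces the budget to be active at the observed profile. Hence $\{\beta_t^i\}$ maximizes the weighted sum at every $t$, and the inclusion $\bigcup_{\mu\in\psimplex} S(\mu) \subseteq \effset$ from \eqref{eq:effrel} delivers Pareto-efficiency. The principal obstacle is precisely this coupling step: welding $M$ individually feasible Afriat systems into a single KKT system with a shared Lagrange multiplier at each $t$ is where the multi-agent content of the corollary lies, and it is the nontrivial extension of Afriat's classical single-agent construction.
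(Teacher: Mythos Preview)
The paper does not actually give a proof here; it simply cites an external reference (Lemma~1 of \cite{snow2022identifying}). So there is no in-paper argument to compare against, and the assessment must be on the internal merits of your proposal.

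Your treatment of regularity and of the identity $U^i(\beta_t^i)=u_t^i$ with supergradient $\lambda_t^i\alpha_t$ is correct and exactly the standard Afriat construction. You are also right that the entire multi-agent content lives in the coupling step. The gap is that your proposed resolution does not close it. For \eqref{thm1:rat} to hold with a \emph{fixed} $\mu\in\psimplex$, the KKT conditions at each $t$ force the existence of a single multiplier $\nu_t$ with $\nu_t\alpha_t\in\mu^i\partial U^i(\beta_t^i)$ for every $i$; generically $\partial U^i(\beta_t^i)=\{\lambda_t^i\alpha_t\}$, so this demands $\mu^i\lambda_t^i=\nu_t$ simultaneously for all $i$ and all $t$, i.e.\ a rank-one factorization $\lambda_t^i=\nu_t/\mu^i$. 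The rescaling you invoke, $(u_t^i,\lambda_t^i)\mapsto(c^iu_t^i,c^i\lambda_t^i)$, preserves feasibility of \eqref{af_ineq} but leaves the ratios $\lambda_t^i/\lambda_t^j$ unchanged across $t$, so it cannot manufacture such a factorization from an arbitrary feasible solution. Your sentence ``an outer rescaling \dots\ supplies the needed factorization'' is therefore not justified as written.

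What this reveals is that the corollary, read literally (``given constants which make \eqref{af_ineq} feasible''), is too strong: it holds for those feasible $(u_t^i,\lambda_t^i)$ that already have the rank-one structure, which is precisely what the constructive direction of Theorem~\ref{thm:cherchye1} produces, but not for every feasible solution of the decoupled LP. Your write-up should either (i) state explicitly that the corollary is being proved for the constants delivered by the proof of Theorem~\ref{thm:cherchye1}, and verify that those satisfy $\lambda_t^i=\nu_t/\mu^i$; or (ii) weaken the conclusion to time-varying weights $\mu_t\in\psimplex$, which your pointwise choice $\mu_t^i\propto 1/\lambda_t^i$ does handle and which still yields Pareto-efficiency at each $t$ via \eqref{eq:effrel}. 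Either fix is short, but one of them must be made; as it stands, the coupling paragraph asserts exactly the step that needs proof.
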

\begin{proof}
See Lemma 1 of \cite{snow2022identifying}.
\end{proof}

This Corollary gives us a way to predict future system outputs by reconstructing feasible utility functions which generate the multi-objective optimal dynamics. Then, for probe $\alpha_t$, Pareto-efficient responses can be predicted by computing the constrained multi-objective optimal solutions \eqref{def:coord_eq}.  \footnote{The reconstructed piecewise linear utilities \eqref{eq:Utrec} are not unique and are ordinal by construction. Ordinal means that any positive monotone increasing transformation of the utility function
will also rationalize the dataset $\dataset$. This is why the budget constraint $\alpha_t^\top \sum_{i=1}^{\na}\beta_t^i$ is without loss
of generality; it can be scaled by an arbitrary positive constant and Theorem~\ref{thm:cherchye1} still holds.}


\subsubsection{Predicting Future Responses}
Having reconstructed rationalizing utilities via Corollary~\ref{cor:Utrec}, we can predict how the multi-agent system will respond to unseen probes. 
Given a new constraint vector $\alpha_{t+1}$, the predicted Pareto-efficient responses $\{\beta_{t+1}^i\}_{i=1}^M$ are obtained by solving the same multi-objective program \eqref{def:coord_eq}, but now with the reconstructed utilities. 

\begin{algorithm}[H]
\caption{Predicting Future System Dynamics}
\begin{algorithmic}[1]
    \State \textbf{Input:} Dataset $\dataset = \{\alpha_t, \{\beta_t^i\}_{i=1}^{\na}, t\in[T]\}$
    \If{\eqref{af_ineq} feasible}
    reconstruct utilities $\{U^i(\cdot)\}_{i=1}^M$ via \eqref{eq:Utrec}. 
    \State \textit{Predict Responses}: Given probe $\alpha_{t+1}$, solve the following via a convex optimization routine
    \begin{align*}
        \{\beta_{t+1}^i\}_{i=1}^M &\in \arg\max_{\{\argo^i\}_{i=1}^M} 
        \sum_{i=1}^M \mu^i U^i(\argo^i) 
        \quad \\&s.t.\ \alpha_{t+1}^\top  \Big(\sum_{i=1}^M \argo^i\Big) \leq 1
    \end{align*}
    \State Return the Pareto-frontier $\effset(\{U^i\}_{i=1}^M, \alpha_{t+1})$ by tracing over weights $\mu \in \simplex$. 
    \Else\, conclude non-coordination.
    \EndIf
\end{algorithmic}
\label{alg:predict}
\end{algorithm}

This procedure uses only previously observed data to construct feasible utilities, and then extrapolates responses to new environments by generating the Pareto-frontier. As such, it provides a principled, nonparametric mechanism to forecast coordinated system behavior under counterfactual probes. 

Furthermore, by measuring the observed system response w.r.t. this new probe $\alpha_{t+1}$, one may determine an estimate for the weight $\mu\in\psimplex$ determining the optimization \eqref{thm1:rat}. Thus, this methodology also allows one to \textit{recover the Pareto-efficient utility allocation structure}, i.e., to determine \textit{which agents are given priority} in this joint-optimization. This can directly lead to inference of e.g., multi-agent group leadership roles or group hierarchical prioritization.

\subsubsection{Quantifying Proximity to Pareto-Efficiency}
Supposing the dataset $\dataset$ does not satisfy these conditions, how can we quantify "how close" the behavior is to Pareto-efficiency? We can quantify this through the following metric, which will be crucial in our subsequent developments of optimal statistical detection and robust utility reconstruction. 
\begin{corollary}[Proximity to Pareto-Efficiency]
\label{cor:prox}
We can quantify the proximity of $\dataset$ to Pareto-efficiency as follows. Let $\phi(\dataset)$ be the smallest nonnegative scalar for which parameters 
$\{u_t^i,\lambda_t^i,\, t\in[T]\}_{i\in[\na]}$ exist making the following feasible
\begin{equation}
    u_s^i - u_t^i - \lambda_t^i \alpha_t^\top [\beta_s^i - \beta_t^i]
    \;\leq\; \lambda_t^i\,\phi(\dataset),
    \quad \forall\, s,t\in[T],\ i\in[\na],
    \label{eq:parhat1}
\end{equation} 
Then $\phi(\dataset)$ provides a metric of the dataset’s violation of Pareto-efficiency: $\phi(\dataset)=0$ 
if and only if $\dataset$ is exactly Pareto-efficient, while larger values of $\phi(\dataset)$ indicate 
greater deviation from Pareto-efficiency.
\end{corollary}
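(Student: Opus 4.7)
The plan is to establish the corollary in two pieces: (a) that the infimum in \eqref{eq:parhat1} is well-defined and attained as a nonnegative number, and (b) that this value equals zero precisely when the dataset is Pareto-efficient. The reduction to Theorem~\ref{thm:cherchye1} will do most of the work in part (b), since setting $\phi=0$ in \eqref{eq:parhat1} recovers the exact Afriat-type inequality \eqref{af_ineq}.

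For well-definedness, I would first observe that \eqref{eq:parhat1} is a finite linear system in the variables $\{u_t^i,\lambda_t^i\}_{t\in[T],i\in[M]}$ and $\phi$, and that it is homogeneous under the rescaling $(u,\lambda,\phi)\mapsto(cu,c\lambda,c\phi)$ for $c>0$. A meaningful minimum therefore requires fixing a scale, and I would adopt the normalization $\sum_{t,i}\lambda_t^i=1$ together with a positivity floor $\lambda_t^i\geq\delta>0$; the latter is removed by letting $\delta\downarrow 0$ in the spirit of the classical Afriat construction. Feasibility for some finite $\phi$ is immediate: taking $u_t^i=0$ and uniform $\lambda_t^i=1/(TM)$ reduces \eqref{eq:parhat1} to $\alpha_t^\top(\beta_t^i-\beta_s^i)\leq \phi$, which holds for any $\phi$ exceeding the finite maximum over the bounded index set. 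The feasible region of $(u,\lambda)$ under the normalization is a closed polyhedron, the objective $\phi$ is linear, and we can restrict to a bounded slab $\phi\in[0,\phi_0]$ with $\phi_0$ from the feasibility argument; thus the minimum is attained and $\phi(\dataset)\geq 0$ by construction.

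For the equivalence, if $\phi(\dataset)=0$, the minimizing multipliers satisfy \eqref{eq:parhat1} with zero right-hand side, which is exactly \eqref{af_ineq}; Theorem~\ref{thm:cherchye1} then supplies concave, continuous utilities rationalizing $\dataset$, i.e., exact Pareto-efficient coordination. Conversely, if $\dataset$ is Pareto-efficient, Theorem~\ref{thm:cherchye1} furnishes positive multipliers $\{u_t^i,\lambda_t^i\}$ satisfying \eqref{af_ineq}, which is \eqref{eq:parhat1} at $\phi=0$; combined with $\phi(\dataset)\geq 0$, this forces $\phi(\dataset)=0$. The monotonic interpretation of $\phi(\dataset)$ as a deviation measure follows because, for each multiplier tuple, \eqref{eq:parhat1} quantifies the minimum uniform relaxation of the concavity/revealed-preference inequalities required to fit the data; $\phi(\dataset)$ is the best such relaxation over all feasible multipliers, so strictly positive values correspond to unavoidable slack and hence genuine deviation from Pareto-efficiency.

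The main obstacle is the interaction between the strict positivity $\lambda_t^i>0$ and the homogeneous scaling of \eqref{eq:parhat1}: a naive LP formulation could drive $\phi\downarrow 0$ by shrinking the multipliers jointly, producing a spurious zero value that does not reflect Pareto-efficiency. I expect the cleanest resolution is the normalization-plus-perturbation argument sketched above, which mirrors the standard technique used to construct Afriat multipliers in the classical single-agent case and preserves the clean equivalence with Theorem~\ref{thm:cherchye1}.
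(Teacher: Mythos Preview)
The paper does not supply a formal proof of this corollary; it is stated and then glossed in one sentence as the ``minimal relaxation needed for feasibility,'' with the content implicitly resting on Theorem~\ref{thm:cherchye1}. Your part (b) is exactly this implicit argument: setting $\phi=0$ in \eqref{eq:parhat1} recovers \eqref{af_ineq}, and Theorem~\ref{thm:cherchye1} gives the equivalence with Pareto-efficiency. That part is correct and matches the paper. Your part (a), the well-definedness/attainment discussion, goes beyond what the paper offers, and is a reasonable addition.

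There is, however, a genuine error in your scaling analysis. The system \eqref{eq:parhat1} is \emph{not} linear in $(u,\lambda,\phi)$ jointly (the right side contains the bilinear term $\lambda_t^i\phi$), and it is \emph{not} homogeneous under $(u,\lambda,\phi)\mapsto(cu,c\lambda,c\phi)$. The correct invariance is: for fixed $\phi$, both sides of \eqref{eq:parhat1} are homogeneous of degree one in $(u,\lambda)$, so $(u,\lambda)\mapsto(cu,c\lambda)$ leaves the set of feasible $\phi$ unchanged. Consequently your worry in the final paragraph---that ``shrinking the multipliers jointly'' could drive $\phi\downarrow 0$ spuriously---cannot happen; the minimal feasible $\phi$ is intrinsically scale-free in $(u,\lambda)$. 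Your normalization $\sum_{t,i}\lambda_t^i=1$ is still useful, but for compactness/attainment rather than to prevent a nonexistent degeneracy. Once you fix this, the rest of your well-definedness argument (feasibility at $u\equiv 0$, compactness via the normalization, linearity in $\phi$) goes through.
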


\noindent
In words, even when the observed dataset cannot be rationalized as Pareto-efficient, the solution to \eqref{eq:parhat1} yields the minimal relaxation needed for feasibility. Thus, 
$\phi(\dataset)$ serves as a principled measure of the dataset’s \emph{proximity to Pareto-efficient coordination}. 

We next use this proximity measure to develop an optimal (Type-I error minimizing) statistical detector for determining consistency with Pareto-efficient coordination from noisy measurements of system dynamics.

\subsection{Type-I Error Minimizing Statistical Detector}
\label{sec:StatDet}

We now consider the realistic case when the dataset $\dataset$ is corrupted by noise. We provide a statistical detector for determining whether these \textit{noisy} responses are consistent with multi-objective optimization, with theoretical guarantees on Type-I error. In this noisy regime the utility reconstruction formula \eqref{eq:Utrec} no longer exactly holds; in Section~\ref{sec:robust} we develop a distributionally robust reconstruction procedure for reliably estimating utilities in this noisy regime.

Let $\ndataset$ denote the dataset when the radar responses are  observed in noise:
\begin{equation}
\label{eq:dataset}
    \ndataset = \{\lconst, \nrespi , t \in [T], i \in [\numagents]\}
\end{equation}
where $\nrespi = \respi + \noisei_t$, and $\noise^i_t$ are i.i.d. random variables distributed according to some distribution $\ndistr^i_t$. 
We propose a statistical detector to optimally determine if the responses are consistent with Pareto optimality \eqref{eq:MOP}. Define \\
$H_0$: null hypothesis that the dataset \eqref{eq:dataset} arises from the optimization problem \eqref{def:coord_eq} for all $t\in[T]$. \\
$H_1$: alternative hypothesis that the dataset \eqref{eq:dataset} does not arise from the optimization problem \eqref{def:coord_eq} for all $t\in[T]$. 

There are two possible sources of error: \\
\textbf{Type-I error}: Reject $H_0$ when $H_0$ is valid.\\
\textbf{Type-II error}: Accept $H_0$ when $H_0$ is invalid.

\begin{algorithm}[t]
\caption{Detecting Multi-Objective Optimization}
\begin{algorithmic}[1]
    \For{$l=1:L$}
        \For{$i=1:M$}
             \State simulate $\boldsymbol{\noise}^i_l = [\noise^i_1,\dots,\noise^i_N]^{(l)}, \quad \noise^i_t \sim \Lambda_t^i$
        \EndFor
        \State Compute $\Psi^{l} := \max_i \{\max_{t\neq s}[\lconst(\noise_t^i - \noise_s^i)]\}$
    \EndFor
    \State Compute $\hat{F}_{\Psi}(\cdot)$ from $\{\Psi^l\}_{l=1}^L$
\State Record dataset $\ndataset$ 
\State Solve \eqref{eq:optstat} for $\optstat$
\State Save $\mathcal{P} := \{\hat{u}_t^i, \hat{\lambda}_t^i, t \in [T], i \in [\numagents]\}$ such that
\begin{align*}
    \hat{u}_s^i - \hat{u}_t^i - \hat{\lambda}_t^i \alpha_t^\top (\bar{\var}^i_s - \bar{\var}_t^i) - \hat{\lambda}_t^i \optstati \leq 0 \ \forall i \in [\numagents]
\end{align*}

\State Implement detector \eqref{eq:stat_test} as
\begin{equation*}
    1  - \hat{F}_{\Psi}(\optstat) \begin{cases}
         > \gamma \Rightarrow H_0 \\
         \leq \gamma \Rightarrow H_1
    \end{cases}
\end{equation*}
\If {$H_0$}
 \State Reconstruct objective functions from \eqref{eq:Utrec}
\EndIf
\end{algorithmic}
\label{alg:MOOdet}
\end{algorithm}

We formulate the following test statistic $\optstat$, as a function of $\ndataset$, to be used in the detector: 
\begin{equation}
\label{eq:optstat}
\optstat = \max_i \optstati
\end{equation}
where $\optstati$ is the solution to:
\begin{align}
\begin{split}
\label{eq:LP}
&\min \stati : \exists \ u_t^i > 0, \lambda_t^i > 0 : \\
&u_s^i - u_t^i - \lambda_t^i \alpha_t^\top (\bar{\var}^i_s - \bar{\var}_t^i) - \lambda_t^i \stati \leq 0 
\end{split}
\end{align}
$\optstati$ represents the \textit{proximity} to consistency with \eqref{thm1:rat}, as discussed in Corollary~\ref{cor:prox}. Then, form the random variable $\rvtest$ as 
\begin{align}
\begin{split}
\label{eq:psimax}
&\rvtest = \max_i \rvtesti \\
&\rvtesti = \max_{t\neq s}[\alpha_t^\top (\noisei_t - \noisei_s)]
\end{split}
\end{align}
Then we propose the following statistical detector (with $\gamma \in (0,1)$):
\begin{equation}
\label{eq:stat_test}
\int_{\optstat}^{\infty} f_{\rvtest}(\psi)d\psi 
\begin{cases}
\geq \gamma \Rightarrow H_0 \\ < \gamma \Rightarrow H_1
\end{cases}
\end{equation}
where $f_{\rvtest}(\cdot)$ is the probability density function of $\rvtest$. Let $\fcdf$ be the cdf of $\rvtest$ and $\fccdf$ be the complementary cdf of $\rvtest$. Then we have the following guarantees:

\begin{theorem}
\label{thm:stat_det}
Consider the noisy dataset $\eqref{eq:dataset}$, and suppose \eqref{eq:LP} has a feasible solution. Then 
\begin{enumerate}
    \item The following null hypothesis implication holds:
    \begin{equation}
    \label{eq:H0equiv} H_0 \subseteq \bigcap_{i \in [M]} \{\optstati \leq \rvtesti\} \end{equation}
    \item The probability of Type-I error (false alarm) is 
    \[ \Popt(H_1 | H_0) := \Prob(\fccdf(\optstat) \leq \gamma \ | H_0) \leq \gamma \]

    \item The optimizer $\optstat$ yields the smallest Type-I error bound:
    \begin{align*}
    \begin{split}
    &\Prob_{\bar{\boldsymbol{\Phi}}(\ndataset)}(H_1 | H_0) \geq \Popt(H_1 | H_0) \quad \\ &\quad \forall \,\, \bar{\boldsymbol{\Phi}}(\ndataset) \in [\optstat, \rvtest]
    \end{split}
    \end{align*}
    
\end{enumerate}
\end{theorem}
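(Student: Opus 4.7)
The plan is to establish Part 1 by direct manipulation of the deterministic Afriat inequalities \eqref{af_ineq}, then to obtain Parts 2 and 3 as consequences via monotonicity of $\fccdf$ together with the probability integral transform. For Part 1, under $H_0$ the underlying noiseless responses $\respi$ are Pareto-efficient, so Theorem~\ref{thm:cherchye1} supplies constants $u_t^i,\lambda_t^i>0$ with $u_s^i - u_t^i - \lambda_t^i \alpha_t^\top(\beta_s^i - \beta_t^i) \leq 0$. Substituting $\beta_t^i = \bar{\beta}_t^i - \noisei_t$ and rearranging produces
\begin{equation*}
u_s^i - u_t^i - \lambda_t^i \alpha_t^\top(\bar{\beta}_s^i - \bar{\beta}_t^i) \leq \lambda_t^i \alpha_t^\top(\noisei_t - \noisei_s) \leq \lambda_t^i \rvtesti,
\end{equation*}
where the last inequality uses the definition of $\rvtesti$ in \eqref{eq:psimax}. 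Therefore the triple $(u_t^i,\lambda_t^i,\Phi^i = \rvtesti)$ is feasible for the LP \eqref{eq:LP}, and by minimality $\optstati \leq \rvtesti$ for every $i$, which yields $H_0 \subseteq \bigcap_i \{\optstati \leq \rvtesti\}$.

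For Part 2, taking a maximum over $i$ in Part 1 gives $\optstat \leq \rvtest$ on $H_0$; since $\fccdf$ is monotone non-increasing this implies $\fccdf(\optstat) \geq \fccdf(\rvtest)$, so the false-alarm event satisfies $\{\fccdf(\optstat) \leq \gamma\} \cap H_0 \subseteq \{\fccdf(\rvtest) \leq \gamma\}$. The probability integral transform then gives $\fccdf(\rvtest) \sim \mathrm{Uniform}[0,1]$ (absolute continuity of $\rvtest$ is implicit, since the detector \eqref{eq:stat_test} is written in terms of the density $f_\rvtest$), so $\Prob(\fccdf(\rvtest)\leq\gamma) = \gamma$ and hence $\Popt(H_1 | H_0) \leq \gamma$. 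Part 3 is a second application of monotonicity: for any statistic $\bar{\boldsymbol{\Phi}}(\ndataset)\in[\optstat,\rvtest]$ one has $\bar{\boldsymbol{\Phi}}\geq\optstat$, whence $\fccdf(\bar{\boldsymbol{\Phi}})\leq\fccdf(\optstat)$ and the event inclusion reverses to $\{\fccdf(\optstat)\leq\gamma\}\subseteq\{\fccdf(\bar{\boldsymbol{\Phi}})\leq\gamma\}$, giving $\Prob_{\bar{\boldsymbol{\Phi}}}(H_1 | H_0)\geq \Popt(H_1 | H_0)$.

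The main obstacle is the step in Part 1 that transports a deterministic Afriat certificate to a noisy one while keeping the multipliers $u_t^i,\lambda_t^i$ fixed: one must verify that the single scalar slack $\rvtesti$ simultaneously dominates every pair-specific residual $\alpha_t^\top(\noisei_t-\noisei_s)$, which is built into the definition of $\rvtesti$ as a maximum over $s,t$ in \eqref{eq:psimax}, but must be invoked carefully to link rationalizability under Theorem~\ref{thm:cherchye1} to feasibility of the perturbed LP \eqref{eq:LP}. A secondary subtlety is the implicit assumption that $\rvtest$ is absolutely continuous so that $\fccdf(\rvtest)$ is exactly uniform; absent continuity one would retain only $\Prob(\fccdf(\rvtest)\leq\gamma)\leq\gamma$, which still delivers Part 2 (with loss of tightness) and leaves Part 3 unaffected since the latter uses only monotonicity of $\fccdf$.
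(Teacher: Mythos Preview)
Your proposal is correct and follows essentially the same approach as the paper's proof: Part~1 via the substitution $\beta_t^i = \bar{\beta}_t^i - \noisei_t$ into the Afriat inequalities from Theorem~\ref{thm:cherchye1}, Part~2 via monotonicity of $\fccdf$ and the probability integral transform, and Part~3 via a second monotonicity argument. Your exposition is in fact slightly cleaner than the paper's (you avoid an unnecessary case split on $\optstat = \rvtest$ versus $\optstat < \rvtest$, and you explicitly flag the absolute-continuity caveat for the uniform distribution of $\fccdf(\rvtest)$).
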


\begin{proof}
    See Appendix \ref{pf:stat_det}
\end{proof}

The contribution of this detector is that it provides a strict guarantee on the upper bound of probability of Type-I error; the specific choice of threshold $\gamma$ is left to any particular problem application and may vary depending on design criteria. 

\textit{Practical Implementation}: In practice we typically do not have access to the density $f_{\Psi}(\cdot)$, but often have some assumptions on the noise statistics captured by the distributions $\Lambda_t^i$, such as additive Gaussianity. Thus, we propose to compute an approximation $\hat{F}_{\Psi}(\cdot)$ of the CDF $F_{\Psi}(\cdot)$ using structure of the noise statistics. Algorithm~\ref{alg:MOOdet} then provides a practically feasible implementation of the statistical detector.


\vspace{-0.5cm}

\begin{section}{ Main Result II. Distributionally Robust  Utility Reconstruction}
\label{sec:robust}
While the previous section addressed  statistical \textit{detection} of Pareto-efficient coordination in a multi-agent system, this section  addresses how to reliably \textit{reconstruct utility functions} of the agents in this noisy signal regime. We develop a \textit{distributionally robust} utility reconstruction procedure, which minimizes the worst-case reconstruction error within a specified statistical radius centered at the noisy empirical signals. 

In realistic sensing environments, the analyst does not observe the agents’ responses ${\beta_t^i}$ exactly, but only noisy versions corrupted by the measurement process. This noise can be substantial, e.g., when observing "covert" UAVs; this covertness directly translates into lower bounds on the covariance of any filter estimates. As a result, the empirical datasets $\ndataset$ used for utility reconstruction \eqref{eq:Utrec} may fail to satisfy rationalizability, even when the true underlying responses do. The classical application of utility reconstruction \eqref{eq:Utrec} then produces estimates which are highly sensitive to perturbations: they may appear accurate on average but admit arbitrarily poor approximation in the worst case. For radar or adversarial applications, however, bounding the worst-case error is critical, since a single highly misleading estimate may compromise detection or tracking. These considerations motivate the development of a distributionally robust reconstruction procedure, which explicitly hedges against statistical noise by optimizing over neighborhoods of the empirical distribution. By doing so, we guarantee utility estimates that remain consistent not only on average, but also under adversarial perturbations of the observed data, thus ensuring reliability of the reconstruction in covert and noisy multi-agent sensing regimes.

Here we derive a distributionally robust IRL methodology for reconstructing utility functions from the noisy dataset \eqref{eq:ndataset}. \footnote{Let us further discuss the motivation for this robust utility estimation, in the context of multi-UAV systems. In Section~\ref{sec:uavcoord}, we derive a high-level constrained multi-objective optimization \eqref{moo_int} for covert multi-UAV coordination. The constraint on the group's observability translates, when instantiated in the algorithmic filtering level of Section~\ref{sec:mtf}, to a bound \eqref{eq:linc_rec} on the radar's measurement precision. This is quantified directly as a lower bound on the measurement covariance in any filtering scheme, and thus induces substantial statistical noise in the filtering process which challenges the radar's utility reconstruction. Specifically, large noise variance will lead to utility estimates which will deviate from the true utility functions on average and deviate substantially in the worst-case. Thus, to combat this we propose a \textit{distributionally robust} scheme for utility reconstruction.} This scheme provably, as we motivate theoretically, reduces the worst-case error of the utility estimates. Furthermore, we demonstrate in numerical results that it reduces both the worst-case \textit{and} average-case estimation error, and can be computed efficiently.
    \subsection{Identifying Near-Pareto-Efficient Behavior}
    Suppose we obtain a dataset of probes $\alpha_t$ and noisy signals $\hat{\beta}_t^i = \beta_t^i + \epsilon_t^i$, where $\epsilon_t^i$ is additive noise. Recall this noisy dataset is denoted as
    \begin{equation}
    \label{eq:ndataset}
        \ndataset := \{\alpha_t,\hat{\beta}_t^i, t\in[T]\}_{i\in[\na]}
    \end{equation}
    By Corollary~\ref{cor:prox}, constructing the following function $\phi$ acting on $\ndataset$ provides a quantitative measure of its deviation from Pareto-efficiency. 
    \begin{align}
    \begin{split}
    \label{eq:phisolve}
        \phi(\ndataset) = &\arg\min_{r}: \exists \{u_t^i \in \reals,\lambda_t^i > 0, t\in[T]\}_{i\in[\na]}: \\&u_s^i - u_t^i - \lambda_t^i \alpha_t^\top [\hat{\beta}_s^i - \hat{\beta}_t^i] \leq \lambda_t^i \, r \quad \forall t,s,i
    \end{split}
    \end{align}
    If $\phi(\ndataset) \leq 0$ then, by Theorem~\ref{thm:cherchye1}, the dataset $\ndataset$ is consistent with coordination, and utility functions rationalizing $\ndataset$ can be constructed as \eqref{eq:Utrec}. However, given the noise in $\ndataset$ it is likely that $\phi(\ndataset) > 0$, meaning there do not exist utility functions rationalizing $\ndataset$; but in this case $\phi(\ndataset)$ represents the \textit{proximity} to consistency with \eqref{thm1:rat}, or "optimality", as in Corollary~\ref{cor:prox}. 
    
    In the case when $\phi(\ndataset) > 0$ and Corollary~\ref{cor:Utrec} no longer applies, how can we reconstruct utility functions which are good \textit{approximations} of those rationalizing $\ndataset$? We first outline the classical approach, then propose our robust solution. This robust solution provides accurate utility estimates even in the case of stochastic or adversarial perturbations, which is crucial for prediction of strategic multi-agent system behavior.

\subsection{Utility Reconstruction: Classical Approach to Utility Reconstruction}
\label{sec:urec_clas}
Suppose the \textit{true} dataset $\dataset = \{\alpha_t,\beta_t^i\, t\in[T]\}_{i\in[\na]}$ satisfies \eqref{thm1:rat}. Then, utility functions  $\{\util^i(\cdot)\}_{i\in[\na]}$ rationalizing $\dataset$ can be constructed by \eqref{eq:Utrec} using parameters 
\[\psi := [u_1^1,\lambda_1^1,\dots,u_T^{\na},\lambda_{T}^{\na}]^\top  \in \Psi \subseteq \reals^{2TM}\] taken from \eqref{af_ineq}, where $\Psi$ denotes the feasible space of vectors $\psi$. 

 When handling the \textit{noisy} dataset $\ndataset$, our goal is to reconstruct utility functions $\{\hat{\util}^i(\cdot)\}_{i\in[\na]}$ closely approximating these $\{\util^i(\cdot)\}_{i\in[\na]}$.
Let $\hat{\psi}$ denote the vector corresponding to the parameters $\{\hat{u}_t^i, \hat{\lambda}_t^i, t\in[T]\}_{i\in[\na]}$ such that
    \begin{equation}
        \label{eq:parhat}
        \hat{u}_s^i - \hat{u}_t^i - \hat{\lambda}_t^i\alpha_t^\top [\hat{\beta}_s^i - \hat{\beta}_t^i] \leq \hat{\lambda}_t^i\, \phi(\ndataset)
    \end{equation}  
    Since $\phi(\ndataset)$ represents the closest "distance" to optimality, by \eqref{eq:phisolve}, we have that the utility functions 
    \begin{equation}
    \label{eq:noisyut}
        \hat{\util}^i(\cdot) := \min_{t\in[T]}[\hu_t^i + \hlam_t^i\alpha_t^\top [\cdot - \hb_t^i]]
    \end{equation}
    are the best estimates for $\{\util^i\}_{i=1}^{\na}$, in a specific sense.\footnote{This notion of estimation accuracy can be made precise by considering the Hausdorff distance between Pareto-efficient surfaces generated by $\{\hat{\util}^i\}_{i\in[\na]}$ and $\{\util^i\}_{i\in[\na]}$. This is explained in Sec.~\ref{sec:numeric}.}  
    
    However, the stochastic perturbations in $\ndataset$ may result in reconstructed utility functions \eqref{eq:noisyut} which approximate the true utility functions very poorly in some stochastic realizations ("cases"), even if on average this approximation is acceptable. In particular we have no control over the \textit{worst-case} approximation, which is necessary to control in many applications \cite{gabrel2014recent}, \cite{lin2022distributionally}; this can be addressed using \textit{robust} approaches. 
    

\vspace{-0.3cm}
\subsection{ Utility Reconstruction: Distributionally Robust (DRO) Approach to Utility Reconstruction}
    Our goal is to construct utility functions from the noisy dataset $\ndataset$ which minimize the worst-case reconstruction error, while not increasing the average reconstruction error. This is framed as utility reconstruction via distributionally robust optimization, or more precisely Wasserstein-distributionally robust inverse multi-objective optimization. We leverage techniques developed in \cite{dong2021wasserstein} to accomplish this in our revealed preferences setting. We first motivate this framework, then provide the necessary mathematical notation and define the Wasserstein distance, which serves as our metric over distributions allowing us to achieve \textit{distributional} robustness. 

    \subsubsection{Motivating Distributionally Robust Estimation}

    The distributionally robust estimation problem can be conceptualized as follows. Consider the classical reconstruction \eqref{eq:parhat} through parametrization $\hat{\psi}$. These are obtained through noise-corrupted signals $\{\hat{\beta}_t^i\}_{i=1}^{\na}$ with potentially \textit{unknown} noise distribution. Furthermore, the non-corrupted signals $\{\beta_t^i\}_{i=1}^{\na}$ are rationalized by the "true" utility functions \eqref{eq:Utrec} with parameters solving \eqref{af_ineq}. Since the noise distribution of the corrupted signals $\{\hat{\beta}_t^i\}_{i=1}^{\na}$ is in general unknown, we face inherent \textit{statistical ambiguity}; reliable estimation in the face of such statistical ambiguity is accomplished most readily by distributionally robust estimation, which minimizes the expected worst-case error over statistical distributions within a neighborhood of the empirical distribution.


    \subsubsection{Mathematical Preliminaries}
    Let $\Phi = \{\beta_t^i, t\in[T]\}_{i\in[\na]}$ denote the dataset of signals, and $\Gamma = \otimes_{t=1}^{T}\otimes_{i=1}^{\na}\Gamma_t^i$ the domain of $\Phi$, where $\beta_t^i \in \Gamma_t^i \subseteq \reals_+^{\dimn}$. Then, a particular (noisy) instantiation $\{\hb_t^i, t\in[T]\}_{i\in[\na]}$ corresponds to the empirical distribution $P_T(\cdot) := \otimes_{t=1}^T \otimes_{i=1}^{\na} \delta(\cdot - \hb_t
    ^i)$ on $\Gamma$, where $\delta$ denotes the standard Dirac delta function on $\reals^{\dimn}$. 
    
    Let $B_{\epsilon}(P_T)$ be the set of probability distributions on $\Gamma$ with 1-Wasserstein distance at most $\epsilon$ from $P_T$. The 1-Wasserstein distance between distributions $Q$ and $P$ on space $\CX$ is given by
    \[\wass(Q,P) = \inf_{\pi\in\Pi(Q,P)}\int_{\CX \times \CX} \| x - y\|_2 \pi(dx,dy),\]
    where $\Pi(Q,P)$ is the set of probability distributions on $\CX \times \CX$ with marginals $Q$ and $P$. This is a widely used metric for comparing two probability measures, and has a wealth of useful analytical and computational properties that enable efficient robust optimization \cite{kuhn2019wasserstein}. 
    \subsubsection{Wasserstein Distributionally
    Robust Utility Estimation} 
    
    We can conceptualize the robust estimation objective as the minimax problem 
    \begin{align}
    \begin{split}
    \label{eq:robest}
        &\min_{\psi \in \Psi}\sup_{Q \sim B_{\epsilon}(P_T)}\CE_{\Phi \sim Q}\left[ h(\psi,\Phi)\right] \\
        & h(\psi,\Phi) := \arg\min_{r}: u_s^i - u_t^i - \lambda_t^i\alpha_t^\top [\beta_s^i - \beta_t^i] \leq \lambda_t^i \,r \\
        &\psi = [u_1^1,\lambda_1^1,\dots,u_T^{\na},\lambda_{T}^{\na}]^\top , \,\,\, \Phi = \{\beta_t^i,t\in[T]\}_{i\in[\na]}
    \end{split}
    \end{align}
    The objective \eqref{eq:robest} finds the set of parameters $\psi$ which minimizes the worst-case expected proximity to feasibility over possible datasets $\dataset$ with $\epsilon$ 1-Wasserstein proximity to the noisy dataset $\ndataset$. Thus, when compared to the naive estimation procedure \eqref{eq:noisyut}, \eqref{eq:robest} will better approximate the true utility functions \textit{in the worst case}, making \eqref{eq:robest} a \textit{robust} estimation procedure. Such Wasserstein distributionally robust min-max formulations are becoming ubiquitous in machine learning \cite{kuhn2019wasserstein}, operations management \cite{bertsimas2019adaptive} and economics \cite{poolla2020wasserstein}. 

    It remains to be shown how \eqref{eq:robest} can be computed in practice. This is the focus of the following section.
\end{section}
\vspace{-0.3cm}
\subsection{Semi-Infinite Programming Approach to Solve the DRO}

     Here we show the  equivalence between the distributionally robust utility estimation procedure \eqref{eq:robest} and a semi-infinite program.  We exploit this equivalence to provide a practical algorithm for computing a set of robust utility estimates.
    A semi-infinite program is an optimization problem with a finite number of variables to be optimized but an arbitrary number (continuum) of constraints.
    
    We introduce the following assumptions and notation: 
    \begin{assumption}[Finite Support Noise]
    \label{as:finsup}
        The support of each additive noise $\epsilon_t^i$ distribution is contained within a ball of radius $R$. \footnote{This is satisfied in practice since any physical sensor which measures $\beta_t^i$ will have upper and lower bounds on the measured signal power.}
    \end{assumption}
    \begin{assumption}[Probe Magnitude Bound]
    \label{as:albd}
        $\alpha_t$ is lower bounded in magnitude: $\exists \,\bar{\alpha}: \, \|\alpha_t\| \geq \bar{\alpha} > 0 \, \forall t\in[T]$. 
    \end{assumption}
    \begin{assumption}[Parameter Set Bounds]
    \label{as:convex}
        There exists $\hat{\lambda}>0$ such that $\Psi$ is restricted to the set $\{[u_1^1,\lambda_1^1,\dots,u_T^{\na},\lambda_{T}^{\na}]\}$ with $u_s^i \in [-1,1],\, \lambda_s^i \in [\hat{\lambda},1], \,\forall s \in[T],i\in[\na]$.  \footnote{This is without loss of generality. Observe: if a set of parameters $\hat{\psi} = [\hu_1^1,\dots,\hlam_T^{\na}]\in \Psi$ solves \eqref{eq:parhat}, then so does $c\,\hat{\psi} := [c\hu_1^1,\dots, c\hlam_T^{\na}]$ for any scalar $c>0$. Also, given the boundedness of $\|\alpha_t\|$ and $\|\beta_t^i\|$ the ratio $\hat{u}_s^i/\hat{\lambda}_t^i$ will be bounded from above and below by positive real numbers. Thus, we can always find some $\hat{\psi}$ solving \eqref{eq:parhat} such that $\hat{u}_s^i \in [-1,1], \, \hat{\lambda}_s^i \in [\hat{\lambda},1], \forall s\in[T], i\in[\na]$, with $\hat{\lambda}>0$.}
    \end{assumption}

    By Assumptions~\ref{as:albd}, \ref{as:convex}, and the constraint in \eqref{thm1:rat}, we must have that $h(\psi,\Phi) \leq V := 2(1+R)+2$ for any $\psi \in \Psi, \, \Phi \in \Gamma$, with $\psi$ satisfying Assumption~\ref{as:convex}. 
    Let us denote $\CV := \biggl\{\bv \in \reals^{2}: \,\,0 \leq v_1 \leq 2V,\, \,0\leq v_2\leq V/\epsilon\biggr\}$.
    Now, we have the following equivalence result. 
    \begin{theorem}[Semi-Infinite Reformulation]
        Under Assumptions~\ref{as:finsup} - \ref{as:convex}, \eqref{eq:robest} is equivalent to the following semi-infinite program:
        \begin{gather}\begin{aligned}
        \begin{split}
        \label{eq:siprog}
           & \min_{\psi\in\Psi,\bv\in\CV} \, \epsilon \cdot v_2 + v_1 \,\,\,
             s.t. \, \sup_{\Phi \in \Gamma} G(\psi,\bv, \Phi, \hat{\dataset}) \leq 0 
             \\& G(\psi,\bv, \Phi, \hat{\dataset}) := h(\psi,\Phi) - v_2\sum_{i=1}^{\na}\sum_{t=1}^{T}\|\beta_t^i - \hb_t^i \|_2 - v_1
        \end{split}
        \end{aligned}\raisetag{3.6\baselineskip}\end{gather}
    \end{theorem}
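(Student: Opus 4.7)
The plan is to invoke Wasserstein distributionally robust strong duality to rewrite the inner supremum over $Q \in B_\epsilon(P_T)$ in \eqref{eq:robest} as a Lagrangian-type infimum, and then to recognize that infimum as precisely the semi-infinite program \eqref{eq:siprog}. This follows the standard machinery of \cite{kuhn2019wasserstein} and \cite{dong2021wasserstein}, and is particularly clean here because the empirical distribution $P_T = \otimes_{t,i}\delta(\cdot - \hb_t^i)$ is a single Dirac mass, so the outer expectation collapses to a pointwise evaluation at $\ndataset$.

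Concretely, I would first verify that $h(\psi,\cdot)$ is upper semicontinuous on the Polish space $\Gamma$; this is immediate since $h$ is the optimal value of a linear feasibility program whose right-hand side data depend continuously on $\Phi$. Wasserstein strong duality then yields
\[
\sup_{Q\in B_\epsilon(P_T)}\CE_{\Phi\sim Q}[h(\psi,\Phi)] = \inf_{v_2\ge 0}\Bigl\{\epsilon v_2 + \sup_{\Phi\in\Gamma}\bigl[h(\psi,\Phi) - v_2\textstyle\sum_{i,t}\|\beta_t^i - \hb_t^i\|_2\bigr]\Bigr\},
\]
where $v_2$ is the multiplier dual to the ball constraint $\wass(Q,P_T)\le \epsilon$. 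Introducing an epigraph variable $v_1$ that upper bounds the inner supremum, the outer minimization over $\psi$ combined with this dual reformulation becomes $\min\ \epsilon v_2 + v_1$ subject to $\sup_{\Phi\in\Gamma} G(\psi,\bv,\Phi,\ndataset)\le 0$, which matches \eqref{eq:siprog} term by term.

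The main obstacle, and the technical heart of the argument, is to justify the compactness restriction $\bv\in\CV$. Using Assumption~\ref{as:convex} (bounds $u_t^i\in[-1,1]$ and $\lambda_t^i\in[\hat\lambda,1]$), Assumption~\ref{as:finsup} (finite noise radius $R$, so $\|\beta_t^i\|\le 1+R$), and Assumption~\ref{as:albd} ($\|\alpha_t\|\ge\bar\alpha$), a direct estimate on the closed form $h(\psi,\Phi) = \max_{s,t,i}[u_s^i - u_t^i - \lambda_t^i\alpha_t^\top(\beta_s^i-\beta_t^i)]/\lambda_t^i$ yields the uniform bound $h(\psi,\Phi)\le V := 2(1+R)+2$ over $\Psi\times\Gamma$. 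Since the inner supremum can therefore never exceed $V$, and $(v_1,v_2)=(V,0)$ is always feasible with objective $V$, any candidate with $\epsilon v_2 > V$ or $v_1 > 2V$ is strictly suboptimal, forcing $v_2\le V/\epsilon$ and $v_1\le 2V$ (the factor of two absorbs the slack introduced by the $v_2\sum\|\cdot\|_2$ term). Nonnegativity of $v_1$ at optimum follows by evaluating the semi-infinite constraint at $\Phi=\hb$, which yields $v_1 \ge h(\psi,\hb)\ge \phi(\ndataset)\ge 0$ via Corollary~\ref{cor:prox} once the outer minimization over $\psi$ is carried out. Combining the duality step with these range restrictions establishes the stated equivalence.
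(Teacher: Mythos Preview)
Your proposal is correct and follows the same underlying mechanism as the paper—Wasserstein strong duality applied to the single-atom empirical measure $P_T$, followed by an epigraph reformulation—but the presentation differs. The paper's proof is a three-line verification that the hypotheses of an external black-box result (Corollary~3.8 of \cite{luo2017decomposition}) are met: compactness of $\Gamma$ and $\Psi$ under Assumptions~\ref{as:finsup}--\ref{as:convex}, the uniform bound $h(\psi,\Phi)\le V$, and uniform Lipschitz continuity of $h$ in both arguments. Everything you spell out by hand (the dual multiplier $v_2$, the epigraph variable $v_1$, the range restrictions defining $\CV$) is packaged inside that corollary.

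Two small points of contact with the paper's version. First, the paper invokes Lipschitz continuity of $h$ rather than mere upper semicontinuity; your closed-form expression $h(\psi,\Phi)=\max_{s,t,i}\bigl[(u_s^i-u_t^i)/\lambda_t^i-\alpha_t^\top(\beta_s^i-\beta_t^i)\bigr]$ makes Lipschitz continuity immediate (a finite max of functions that are affine in $\Phi$ and smooth in $\psi$ on $\lambda_t^i\ge\hat\lambda$), so this is not a gap. Second, your nonnegativity argument for $v_1$ is slightly roundabout; a cleaner route is to note that the constraints indexed by $s=t$ in the definition of $h$ force $h(\psi,\Phi)\ge 0$ for every $(\psi,\Phi)$, so evaluating the semi-infinite constraint at $\Phi=\hat{\boldsymbol\beta}$ gives $v_1\ge h(\psi,\hat{\boldsymbol\beta})\ge 0$ directly, without appealing to the outer minimization or Corollary~\ref{cor:prox}.
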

    \begin{proof}
        Under Assumptions~\ref{as:finsup}-\ref{as:convex}, $\Gamma$ and $\Psi$ are compact. We have observed that $h(\psi,\Phi) \leq V$. Now observe by inspection that $h(\psi,\Phi)$ is uniformly Lipschitz continuous in $\psi$ and $\Phi$. Thus we can apply Corollary 3.8 of \cite{luo2017decomposition}.
    \end{proof}
\vspace{-0.3cm}
\subsection{$\delta$-Optimal Robust Utility Reconstruction via Finite Reduction}
How to solve the semi-infinite program \eqref{eq:siprog} computationally? Here we provide a practical finite algorithmic approach which achieves solutions of \eqref{eq:siprog} with arbitrary accuracy, using exchange methods \cite{hettich1993semi}, \cite{dong2021wasserstein}, \cite{joachims2009cutting}. 
We first approximate it by a finite optimization, then iteratively solve this while appending constraints. Let $\tilde{\Gamma} = \{\Phi_1,\dots,\Phi_J\}$ be a collection of $J$ elements in $\Gamma$, i.e., each $\Phi_j, \, j\in[J],$ is a dataset $\{\beta_{t,j}^i, t\in[T]\}_{i\in[\na]}$.  Consider
     the following finite program:
    \begin{align}
    \begin{split}
    \label{eq:finred}
        &\min_{\psi\in\Psi,\bv\in\CV} \, \epsilon \cdot v_2 + v_1 \\
        s.t. \, &\max_{\Phi_j \in \tilde{\Gamma}}  G(\psi,\bv, \Phi_j, \hat{\dataset}) \leq 0
    \end{split}
    \end{align}
    We can iteratively refine the constraints in the finite program \eqref{eq:finred} by introducing the following maximum constraint violation problem:
    \begin{align}
        \begin{split}
        \label{eq:mcv}
            CV = \max_{\Phi \in \Gamma} G(\hat{\psi},\hat{\bv},\Phi,\ndataset) 
        \end{split}
    \end{align}
    where $\hat{\bv} := \{\hat{v}_1, \,\hat{v}_2\}, \hat{\psi} := \{\hu_t^i,\hlam_t^i, t\in[T]\}_{i\in[\na]}$ are optimal solutions to \eqref{eq:finred} under $\tilde{\Gamma}$. Supposing $CV > 0$, we let $\hat{\Phi} \in \Gamma$ be the argument attaining this maximum and append it to $\tilde{\Gamma}$ in \eqref{eq:finred}. Then we iterate, tightening the approximation for the infinite set of constraints in \eqref{eq:siprog} until $CV \leq \delta$; by \cite{dong2021wasserstein} this termination yields a $\delta$-optimal solution of \eqref{eq:siprog}. 
    \footnote{A $\delta$-optimal solution solves \eqref{eq:siprog} with the $0$ r.h.s. replaced by $\delta$; One can show by Lipschitz continuity arguments that this thus approximates the solution $\hat{\psi}$ of \eqref{eq:siprog} by some constant factor times $\delta$ (in e.g., $L^2$ norm on $\Psi$); so we approximate arbitrarily well as $\delta \to 0$.}
    \begin{algorithm}[t]
    \caption{Wasserstein Robust Utility Estimation}
    \label{alg:dro}
    \begin{algorithmic}[1]
        \State Input: Noisy dataset $\ndataset = \{\alpha_t,\,\hb_t^i, t\in[T]\}_{i\in[\na]}$, Wasserstein radius $\epsilon$, stopping tolerance $\delta$. 
        \State Initialize: $\hat{\psi} \in \Psi, \hat{v} \in \CV, \tilde{\Gamma} \leftarrow \emptyset, CV = \delta+1$.
        \While{$CV \geq \delta$}
            \State Solve \eqref{eq:mcv} with $\hat{\psi},\hat{\bv}$, returning $\hat{\Phi}$, $CV$.
            \If {$CV$ > 0} $\tilde{\Gamma} \leftarrow \tilde{\Gamma} \cup \hat{\Phi}$ \EndIf
            \State Solve \eqref{eq:finred} with $\tilde{\Gamma}$, returning $\hat{\psi}, \hat{\bv}$.
        \EndWhile
        \State Output: $\delta$-optimal solution $\hat{\psi}$ of \eqref{eq:siprog}; thus, of \eqref{eq:robest}.
    \end{algorithmic}
    \end{algorithm}

    Algorithm~\ref{alg:dro} illustrates this iterative procedure, and by \cite{dong2021wasserstein} it is guaranteed to converge in
    $\mathcal{O}\left( \left(\frac{1}{\delta} + 1\right)^{2T\na + 2}\right)$ iterations. In Section~\ref{sec:numeric}, we validate the performance of Algorithm~\ref{alg:dro} in producing utility estimates which substantially decrease the worst case-error, when compared with classical estimators \eqref{eq:noisyut}. 

\section{Extended Example. Detecting and Identifying Pareto-efficient Coordination amongst UAVs}
\label{sec:uavcoord}
The goal of this section is to detect coordination among a group of UAVs in an adversarial setting. By adversarial, we mean the UAVs are trying to attain their objective while remaining undetected by the radar. This can be mathematically formalized as a constraint on the group's observability, which translates, when instantiated in the algorithmic filtering level of Sub-section~\ref{sec:mtf}, to a bound \eqref{eq:linc_rec} on the radar's measurement precision. This is quantified directly as a lower bound on the measurement covariance in any filtering scheme, and thus induces substantial statistical noise in the filtering process which challenges the radar's utility reconstruction. This directly motivates our statistically optimal coordination detection, and robust utility reconstruction algorithms. 

Next, we show how the abstracted mathematical form \eqref{def:coord_eq} can be derived from the dynamics of such a \textit{covertly coordinating group of UAVs}, thus enabling the usage of our statistical detection and robust utility reconstruction algorithms. One of the novelties of our formulation is that we identify this linearly constrained multi-objective optimization as arising from the state-space spectral dynamics of a radar -- multi-UAV interaction scenario. Thus, reconstructed utility functions in this setting encode information about target intent through preferences over the spectral modulation of their state dynamics. 

Now let us introduce the state space dynamics constituting a radar -- multi-UAV tracking scenario. We consider two time scales for the interaction: the fast time scale $\ftime = 1,2,\dots$ represents the scale at which the target state and measurement dynamics occur, and the slow time scale $\stime = 1,2,\dots$ represents the scale at which the radar probes (tracking signals) and UAV maneuvers $\{\varti\}_{i=1}^{\numagents}$ occur.  

\begin{definition}[Radar--Multi-UAV Interaction]
The radar -- UAV network interaction has the following dynamics: 
\begin{align}
\label{inter_dynam}
    \begin{split}
        \textrm{radar emission}: \lconst &\in \reals^{\dimn}_+ \\
        \textrm{UAV (agent) $i$ maneuver} : \varti &\in \reals^{\dimn}_+ \\
        \textrm{UAV (agent) $i$ state} : x_{\ftime}^i &\in \reals^{\sdim}, \
        x_{\ftime + 1}^i \sim p_{\varti}(x| x_{\ftime}^i) \\
        \textrm{radar observation}: \measi_{\ftime} &\in \reals^{\mdim}, \
        \measi_{\ftime} \sim p_{\lconst}(y|x_{\ftime}^i) \\
        \textrm{radar tracker}: \pi^i_{\ftime} &= \mathcal{T}(\pi^i_{\ftime-1},\measi_{\ftime}) 
    \end{split}
\end{align}
where $\pi_k^i$ is radar $i$'s target state posterior and $\mathcal{T}$ is a general Bayesian tracker.
\end{definition}
For a fixed $\stime$ in the slow time-scale, $\lconst$ abstractly represents the radar's signal output which parameterizes its measurement kernel, and $\varti$ represents UAV $i$'s maneuver (radial acceleration, etc.) which parametrizes the state update kernel. These interaction dynamics are illustrated in Figure~\ref{fig:interaction}. 
Taking the point of view of the radar, we aim to detect if the targets are \textit{covertly coordinating}. That is, here we identify the analyst, who observes the dataset \eqref{eq:ndataset}, with the radar itself which emits the probe signal. 


We next present precisely what is meant by covert coordination, and motivate how the mathematical definition can be derived from practical multi-target filtering algorithms.


\subsection{Multi-UAV Covert Coordination Definition}
In formulating our problem, it is necessary to define rigorously what we mean by UAV coordination. Examples of such coordination definitions have been proposed and studied in works \cite{snow2022identifying}, \cite{snow2023statistical}, \cite{shi2017power}. We consider the following coordination specification. Each UAV has an individual utility function $f^i$, which maps from its state dynamics $\beta_t^i$, parametrizing the state transition kernel in \eqref{inter_dynam}, to a real-valued utility, i.e., 
\[f^i: \reals^{\dimn} \to \reals\]
Such utility functions can capture the UAVs' flight objectives by quantifying a reward profile for flight dynamics. The UAVs then should act to maximize their individual utility functions at each point in time in order to achieve their flight objective. However, such individual maximization would decouple the UAV dynamics such that they act independently of each other's trajectories. A notion of coordination would need to capture a certain coupling or codependency between these trajectories.

We propose to quantify this coupling through a constraint on the radar's average measurement precision. This captures the idea that the UAVs aim to obtain some flight objective while jointly acting such that the entire network remains hidden to a certain degree from the radar. This induces a coupling between UAV trajectories; the UAVs must adjust their individual sequential state dynamics such that the entire network satisfies a certain undetectability constraint. 

This coordination formulation can be summarized informally as:
\begin{align}
    \begin{split}
    \label{moo_int}
        &\textrm{maximize }(f^1,\dots,f^{\numagents})\textrm{,  such that} \\ &\textrm{average radar measurement precision} \leq \textrm{bound}
    \end{split}
\end{align}

The 'maximize $(f^1,\dots,f^{\numagents})$' can be interpreted in the framework of Pareto optimality, as introduced in the previous section. The radar measurement precision bound can be derived from standard multi-target tracking algorithms, as we show in Section~\ref{sec:mtf}. Furthermore, this constraint is well-motivated in practical circumstances \cite{yan2019hiding}, \cite{zhou2021uav},\cite{hou2023uav}.

This leads us to our formal definition of coordination in a UAV network, given as follows:
\begin{definition}[Covert Multi-UAV Coordination]
\label{def:coord_muav}
Considering the interaction dynamics \eqref{inter_dynam}, we define a coordinating UAV network to be a network of $\numagents$ UAVs, each with individual concave, continuous and monotone increasing\footnote{This objective function structure is known as 'locally non-satiated' in the micro-economics literature, and is not necessarily restrictive when considering target objectives, see \cite{krishnamurthy2020identifying}.} objective functions $f^i: \reals^{\dimn} \to \reals, i\in[\numagents]$, which produces output signals $\{\varti\}_{i=1}^{\numagents}$ on the slow time-scale in accordance with
\begin{align}
\label{def:cov_coord}
    \begin{split}
        \{\varti\}_{i=1}^{\numagents} \in &\arg\max_{\{\argo^i\}_{i=1}^{\numagents}} \{f^1(\argo^1),\dots,f^{\numagents}(\argo^{\numagents})\} \\
        & s.t. \quad \alpha_t^\top  (\sum_{i=1}^{\numagents} \argo^i) \leq 1 
    \end{split}
\end{align}
    
\end{definition}

Note that \eqref{def:cov_coord} exactly corresponds to \eqref{def:coord_eq}. Thus, a covertly coordinating UAV network controls its joint state dynamics (through e.g., controlling a certain formation) such that they are \textit{Pareto-efficient} (Def. \eqref{def:par_opt}) with respect to each objective function, the tracking signal from the radar, and a constraint on the UAV network's detectability. 

The following proposition summarizes this mapping

\begin{proposition}[Radar Tracking $\rightarrow$ Linear Constraint]
\label{prop:radar_constraint}
Under standard linear Gaussian tracking models with Kalman or JPDAF filtering,
the radar’s asymptotic measurement precision $\Sigma_t^{*-1}(\alpha_t, \beta^i_t)$
is monotone increasing in the probe parameter $\alpha_t$ and the UAV maneuver
parameter $\beta^i_t$. Consequently, enforcing a bound on the radar’s average
measurement precision across all $M$ UAVs is equivalent to the linear constraint
\begin{equation}
    \alpha_t^\top \Bigg( \sum_{i=1}^M \beta^i_t \Bigg) \leq 1 .
    \label{eq:radar_constraint}
\end{equation}
\end{proposition}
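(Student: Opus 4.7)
The plan is to derive the linear constraint form by tracing how the asymptotic filter covariance depends on the probe and maneuver parameters, and then aggregating across targets through the JPDAF structure. First, I would instantiate the linear Gaussian state-space model underlying \eqref{inter_dynam} with explicit parametrizations of the measurement noise covariance $R_\stime(\lconst)$ and process noise covariance $Q_\stime(\varti)$. Standard radar models (matched-filter output SNR, Cram\'er--Rao bounds) yield $R_\stime(\lconst)$ whose inverse scales affinely in the probe power/bandwidth coordinates of $\lconst$, while target-maneuver models yield $Q_\stime(\varti)$ whose dependence on $\varti$ determines how aggressively the UAV perturbs the predicted state. I would write the stabilizing solution $\Sigma^*_\stime$ of the discrete algebraic Riccati equation (DARE) associated with the Kalman or JPDAF recursion as $\Sigma^*_\stime = \ARE$ and take precision $\Sigma^{*-1}_\stime$ as the measurement information content.

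Next, I would invoke the standard monotonicity of the DARE: the stabilizing solution is monotone nondecreasing in both $Q$ and $R$ in the positive semidefinite order, and inversion reverses this ordering. Combined with the parametrizations in the previous paragraph, this produces the first claim of the proposition: $\Sigma^{*-1}_\stime(\lconst,\varti)$ is monotone increasing in each component of $\lconst$ (sharper waveform $\Rightarrow$ smaller $R$ $\Rightarrow$ larger precision) and monotone increasing in each component of $\varti$ (more pronounced maneuver signature $\Rightarrow$ sharper measurement innovation under JPDAF data association). The JPDAF step is important because it aggregates per-target measurement likelihoods through the association weights, allowing the network-level precision to decompose across the $\na$ UAVs.

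I would then extract the linear form. Applying a scalar precision surrogate (e.g., the trace of $\Sigma^{*-1}_\stime$, or any affine functional of it used in the radar's detectability metric) and using a first-order expansion around the nominal operating point, the JPDAF additivity across associated targets yields an expansion of the form
\begin{equation*}
\operatorname{tr}\bigl(\Sigma^{*-1}_\stime(\lconst,\{\varti\}_{i=1}^{\na})\bigr) \;\approx\; c_0 \;+\; \sum_{i=1}^{\na} \lconst^\top \varti \;+\; \textrm{h.o.t.},
\end{equation*}
where the cross term $\lconst^\top\varti$ reflects the bilinear Fisher-information contribution of probe power allocation $\lconst$ interacting with target spectral/kinematic signature $\varti$. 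Imposing the covertness threshold on this surrogate, absorbing $c_0$ into the threshold, and rescaling (recall the normalization to unity is without loss of generality, as noted after \eqref{eq:MOP}) yields exactly $\lconst^\top\bigl(\sum_{i=1}^{\na}\varti\bigr)\leq 1$.

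The main obstacle is justifying the bilinear structure of the precision in $(\lconst,\varti)$ rather than the generally nonlinear dependence inherited from the DARE. I would address this in two ways: (i) appeal to canonical radar parametrizations in which $R_\stime(\lconst)^{-1}$ is linear in $\lconst$ (e.g., spectral power allocated across subbands) and the measurement Jacobian is linear in a target-signature vector parametrized by $\varti$, so that Fisher information is exactly bilinear; and (ii) for more general models, restrict attention to the leading-order terms of the precision expansion around a nominal operating regime, which the covertness constraint justifies since the UAVs operate close to the boundary of the detectable region. Together, these two arguments yield the equivalence in \eqref{eq:radar_constraint}.
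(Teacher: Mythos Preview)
Your proposal has a genuine gap in how you obtain the linear form. You treat the bilinear dependence $\lconst^\top\varti$ as arising from a first-order expansion of $\operatorname{tr}(\Sigma_\stime^{*-1})$ around a nominal operating point, with higher-order terms absorbed or neglected. This would at best give an approximate constraint, not the equivalence asserted in the proposition. The paper's argument is quite different: there is no linearization step at all. Instead, $\lconst$ and $\varti$ are \emph{defined} as the eigenvalue vectors of $R_\stime^{-1}$ and $Q_\stime$ respectively (this is the ``spectral interpretation'' discussed around \eqref{eq:noise}). With that parametrization in place, the paper invokes the information Kalman filter formulation to show that the asymptotic precision $\Sigma_\stime^{*-1}(\lconst,\varti)$ is monotone in these eigenvalue parameters, and then cites Lemma~3 of \cite{krishnamurthy2020identifying} for the exact equivalence $\lconst^\top\varti\leq 1 \Leftrightarrow \Sigma_\stime^{*-1}(\lconst,\varti)\preceq\bar{\Sigma}^{-1}$ for a single target. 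Summing over the $\na$ targets yields \eqref{eq:radar_constraint}. The JPDAF extension follows the same template: one checks that the coupled covariance update \eqref{eq:CovUpd} is monotone in the relevant noise parameters, then reapplies the same monotonicity-to-linear-constraint argument.

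So the missing idea is that linearity is a \emph{consequence of the spectral parametrization choice}, not an artifact of Taylor expansion. Your route (i)---canonical parametrizations where Fisher information is exactly bilinear---is heading in the right direction, but you frame it as one of two fallback options rather than the core mechanism, and you do not identify the eigenvalue parametrization specifically. Your route (ii)---leading-order expansion near the covertness boundary---would only justify an approximate inequality and cannot deliver the stated equivalence.
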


\noindent
Proposition~\ref{prop:radar_constraint} provides the analytical bridge between
low-level radar filtering dynamics and the high-level linearly constrained
multi-objective optimization~\eqref{def:coord_eq}. This equivalence justifies treating
covert multi-UAV coordination as a Pareto-efficient resource allocation problem
subject to the shared constraint~\eqref{eq:radar_constraint}.

We next specify precisely how Proposition~\ref{prop:radar_constraint} is justified from practical filtering dynamics.

\subsection{Deriving the Covertness Bound from Tracking Dynamics}
\label{sec:mtf}

The goal of this section is to show how the high-level coordination framework \eqref{def:coord_eq} can be recovered from standard multi-target filtering procedures. Specifically, we recover the "covertness" bound $\alpha_t^\top (\sum_{i=1}^{\na}\beta^i)\leq 1$. \textit{These serve as illustrative examples of how to map complex multi-target tracking algorithms to the constrained multi-objective optimization \eqref{def:cov_coord}}. One should be able to extend these mappings to other target tracking schemes. 

\subsubsection{Multi-Target Filtering}

Here we specify a concrete example of the abstract dynamics \eqref{inter_dynam}. Linear Gaussian dynamics for a target's kinematics \cite{li2003survey} and linear Gaussian measurements at each radar are widely assumed as a useful approximation \cite{bar2004estimation}. Thus, we will consider the following linear Gaussian state dynamics and measurements over the \textit{fast time scale $\ftime \in \nat$}, with a particular $t\in\nat$ fixed:

\begin{align}
    \begin{split}
    \label{lin_gaus}
        \statei_{\ftime+1}^i &= A^i\statei_{\ftime}^i + \snoisei_{\ftime}^i, \  \statei_0^i \sim \pi^i_0, \\
        \measi_{\ftime} &= C^i\statei_{\ftime}^i + \mnoisei_{\ftime}, \ i\in[\na]
    \end{split}
\end{align}
where $\statei_{\ftime}^i, \snoisei_{\ftime}^i \in \reals^{\sdim}$ are the target $i$ state and noise vectors, respectively, and $A^i \in \reals^{\sdim\times \sdim}$ is the state update matrix for target $i$. $\measi_k \in \reals^{\mdim}$ is the radar's measurement of target $i$, $C^i \in \reals^{\mdim \times \sdim}$ is the measurement transformation, and $\mnoisei_k \in \reals^{\mdim}$ is the measurement noise. The constraints and subsequent radar responses will be indexed over the \textit{slow time scale} $\stime \in \nat$. These parameterize the state and noise covariance matrices:
\begin{equation}
\label{eq:noise}
    \snoisei_{\ftime} \sim \gaussN(0,Q_t(\beta_t^i)), \ \mnoisei_{\ftime} \sim \gaussN(0,R_t(\alpha_t))
\end{equation}

 In this spectral interpretation, $\varti$ represents the vector of eigenvalues of state-noise covariance matrix $Q_t$ and $\lconst$ represents the vector of eigenvalues of the inverse measurement noise covariance matrix $R_t^{-1}$. Thus, given this interpretation we can view modulations of $\alpha_t$ as corresponding to increased/decreased measurement precision on the part of the radar, and modulations of $\beta_t^i$ as adjusting the target dynamics. See Appendix~\ref{ap:waveforms} for precise details on such waveform modulation.

A simple interpretation of the multi-target tracking procedure is a standard de-coupled Kalman filter, whereby after measurements are associated to each target, a standard Kalman filter is applied to track each target state separately. This procedure is idealized, but allows for a nice exposition of the connection between filtering precision and the constraint in \eqref{def:cov_coord}. 

\subsubsection{Deriving the Linear Spectral Constraint}
Under standard assumptions on the linear Kalman filtering model (see Appendix~\ref{sec:kfd} for all details) for tracking a single UAV, the asymptotic (in $k$, for fixed $t$) predicted covariance is the unique non-negative definite solution of the algebraic Riccati equation:
\begin{align}
    \begin{split}
    \label{eq:ARE}
        &\ARE := \\
        &- \kcov + A^i(\kcov - \kcov (C^i)^\top [C^i\kcov (C^i)^\top  + R_t(\lconst)]^{-1}C^i\kcov)(A^i)^\top  
        \\& + Q_t(\varti) = 0
    \end{split}
\end{align}

Let $\kcov_{\stime}^{* -1}(\lconst,\varti)$ be the inverse of this solution, representing the time $t$ measurement \textit{precision} obtained by the radar. Then we have the following guarantee

\begin{theorem}[\cite{krishnamurthy2020identifying}]
We can represent a limit $\bar{\kcov}^{-1}$ on the radar's precision of target $i$ measurement, $\kcov_{\stime}^{* -1}(\lconst,\varti)$ at time $t$ as the simple linear inequality $\lconst ^\top  \varti \leq 1$. Specifically:
\begin{equation}
\label{eq:boundeq}
\alpha_t^\top \varti \leq 1 \Longleftrightarrow  \kcov_{\stime}^{* -1}(\lconst,\varti) \leq \bar{\kcov}^{-1}
\end{equation}
where the constant $1$ bound is taken without loss of generality.
\end{theorem}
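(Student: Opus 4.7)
The plan is to establish the stated equivalence by exploiting the spectral structure of the algebraic Riccati equation \eqref{eq:ARE} combined with classical monotonicity properties of its solution. Recall that, in our spectral interpretation, $\lconst$ is the vector of eigenvalues of the measurement precision $R_t(\lconst)^{-1}$, while $\varti$ is the vector of eigenvalues of the state-noise covariance $Q_t(\varti)$. I would work in a common eigenbasis in which both $R_t$ and $Q_t$ are diagonal, and in which the structural matrices $A^i, C^i$ admit a compatible form so that the ARE decouples (this is the standard assumption under which the result in \cite{krishnamurthy2020identifying} is stated).

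First, I would invoke the classical monotonicity of the ARE solution in the L\"owner order: the asymptotic predicted covariance $\kcov_{\stime}^*(\lconst,\varti)$ is non-decreasing in $Q_t(\varti)$ and non-decreasing in $R_t(\lconst)$. Dually, the precision $\kcov_{\stime}^{*-1}(\lconst,\varti)$ is non-decreasing in $\lconst$ (more measurement precision improves tracking) and non-increasing in $\varti$ (more target-maneuver noise degrades tracking). This already fixes the direction of the trade-off that the linear form $\lconst^\top\varti$ must capture.

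Second, under the decoupling assumption, the multivariate ARE reduces to $\dimn$ scalar Riccati equations, one per eigenmode. Each scalar ARE admits a closed-form nonnegative solution whose inverse (the mode-wise precision $\sigma_k^{*-1}$) can be written explicitly as a strictly monotone function of the pair $(\lconst^{(k)}, \varti^{(k)})$. Pulling these scalar inequalities together in the Löwner order, the matrix bound $\kcov_{\stime}^{*-1}(\lconst,\varti) \leq \bar{\kcov}^{-1}$ becomes a collection of scalar inequalities, each of which is equivalent to an affine constraint in the corresponding pair $(\lconst^{(k)}, \varti^{(k)})$. Aggregating these and exploiting the freedom to rescale $\bar{\kcov}$ (this is precisely what the ``without loss of generality constant $1$'' in the statement uses), the aggregated constraint collapses to the single linear inequality $\lconst^\top \varti \leq 1$, establishing both directions of the equivalence.

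The main obstacle, I expect, will be the careful bookkeeping in going from the mode-wise scalar equivalences to a single scalar linear bound in the L\"owner sense: one must show that no information is lost in the aggregation, and that the normalization constant can always be absorbed to obtain exactly $1$ on the right-hand side. A secondary subtlety is justifying the decoupling assumption intrinsic to the spectral parametrization of $Q_t, R_t$; if the structural matrices $A^i, C^i$ do not commute with the chosen eigenbasis, one would need either a basis change argument or to appeal directly to Theorem~A.1 of \cite{krishnamurthy2020identifying}, where this reduction is carried out in full. Once these steps are in place, the stated equivalence \eqref{eq:boundeq} follows directly.
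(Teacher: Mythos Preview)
The paper does not give a self-contained proof of this theorem: it cites Lemma~3 of \cite{krishnamurthy2020identifying} and records only the key idea, namely that the asymptotic precision $\kcov_{\stime}^{*-1}(\cdot,\varti)$ is monotone increasing in $\lconst$, established via the \emph{information Kalman filter} formulation. Your first step---monotonicity of the ARE solution in the L\"owner order---is in the same spirit, though you reach it through comparison theorems for the Riccati recursion rather than by passing to the information form; the latter is what the cited source actually uses, and it makes the precision monotonicity in $\lconst$ essentially immediate because the information filter propagates $\kcov^{-1}$ directly.

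Your second and third steps, however, contain a genuine gap that is more than the ``bookkeeping'' you anticipate. Decoupling the matrix ARE into $\dimn$ scalar equations already requires $A^i$, $C^i$, $Q_t$, $R_t$ to share an eigenbasis, which is a strong structural hypothesis. More seriously, even granting the decoupling, the claim that each scalar precision bound $\sigma_k^{*-1}\leq\bar\sigma_k^{-1}$ is equivalent to an \emph{affine} constraint in $(\lconst^{(k)},\varti^{(k)})$ is false in general: the scalar ARE solution involves a square root of the coefficients, and the resulting level set is not linear in the product $\lconst^{(k)}\varti^{(k)}$. And even if each mode did yield an affine constraint, the conjunction $\sigma_k^{*-1}\leq\bar\sigma_k^{-1}$ for all $k$ (which is what the L\"owner bound means) is a box constraint in the modal products, not a single half-space $\lconst^\top\varti\leq 1$. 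The ``without loss of generality'' normalization absorbs a scalar, not a shape mismatch. To close the argument you would need to follow the actual reduction in \cite{krishnamurthy2020identifying}, which relies on the information-filter monotonicity to define the bound $\bar\kcov^{-1}$ implicitly as the level set of the map $(\lconst,\varti)\mapsto\kcov_{\stime}^{*-1}$, rather than trying to derive linearity mode-by-mode.
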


The key idea behind this equivalence is to show the asymptotic precision $\kcov^{* -1}_n(\cdot,\varti)$ is monotone increasing in the first argument $\lconst$ using the information Kalman filter formulation. Then, we can represent a constraint on the radar's average precision over measurements of all targets as
 \begin{equation}
 \label{eq:linc_rec}
 \lconst^\top  (\sum_{i=1}^{\numagents} \varti) \leq 1
\end{equation}
 Thus, we recover a direct correspondence between the radar's average measurement precision and the linear inequality constraint in \eqref{def:cov_coord}. Thus, we derive the constrained multi-objective optimization \eqref{def:cov_coord} as corresponding to covert coordination, from filtering-level dynamics.

\textit{Remark. Joint Probabilistic Data Association Filter}:
The recovery of this linear constraint \eqref{eq:boundeq} from the de-coupled Kalman filter gives a clear correspondence between the filtering dynamics and the high-level objective constraint \eqref{moo_int}. However, this de-coupled Kalman filtering scheme is idealized and simplified. In Appendix~\ref{ap:jpdaf} we outline a more sophisticated multi-target tracking algorithm, the joint probabilistic data association filter (JPDAF), which is widely used in practice \cite{fortmann1980multi}, \cite{rezatofighi2015joint}. We show the same recovery of the linear constraint \eqref{eq:linc_rec}, thus illustrating the consistency of our abstract linearly constrained optimization \eqref{def:cov_coord} with more sophisticated tracking techniques. 

\vspace{-0.5cm}
\subsection{Implications for Coordination Detection and Utility Reconstruction in Multi-UAV Systems}

The derivation in Sections~V-A and~V-B established that the filtering-level dynamics of a multi-UAV system can be expressed in the form of a linearly constrained multi-objective optimization problem. Each UAV’s filtering and maneuvering objective contributes a component to a global objective, and the coupling constraints arise naturally from shared observability bounds. In this formulation, the realized system trajectories can be viewed as solutions to a weighted-sum optimization subject to linear constraints, precisely of the form analyzed in Theorem~\ref{thm:cherchye1}.

This equivalence has two key implications. First, it enables the use of our \emph{coordination detection} framework of Section~\ref{sec:MOO_detector}: by applying the statistical detector to noisy observed UAV trajectories in response to varying radar probes, an analyst can test whether the system’s behavior is consistent with Pareto-efficient multi-objective optimization. In operational terms, this allows the analyst to determine whether the UAVs are actively coordinating their sensing and maneuvering strategies, or whether their responses could be explained by independent, uncoordinated policies.

Second, the equivalence permits the \emph{robust utility reconstruction} of Section~\ref{sec:robust}. These reconstructed utilities provide insight into the implicit trade-offs the system is making—for example, between radar observability and formation maintenance, or between sensing fidelity and exposure risk. Moreover, since the reconstructed utilities are predictive, they can be used to forecast the system’s response to future probing signals, thereby enabling anticipatory countermeasures.

In summary, the representation of multi-UAV dynamics as a constrained multi-objective optimization problem is not merely a modeling convenience: it provides the analytical bridge that allows us to apply our general results on coordination detection and robust utility reconstruction directly to the multi-UAV setting.

\section{Numerical Examples.  Detection of Pareto-Efficient Coordination in Multi-UAV Systems}
\label{sec:numeric}
The following example illustrates the performance of our statistical detection and robust utility reconstruction procedures, given a simple UAV multi-objective optimization model.  

All of our numerical simulations are fully reproducible, with open-source code found in the following repository: \texttt{\small https://github.com/LukeSnow0/Collective-IRL}. This code is amenable to analysis with an arbitrary number of agents, and in high dimensions. However, for the sake of exposition and visualization we limit here to three agents with two-dimensional  utility functions. 

\subsection{Data Generation Model}
For our numerical examples we consider the case with $M=3$ targets, with evolving dynamics modulated by $\varti \in \reals^2$, and with objective functions given by 
\begin{align}
\begin{split}
\label{eq:NumUtils}
    &f^1(\var) = \textrm{det}(Q(\beta)) = \beta(1)\times\beta(2), \\
    &f^2(\beta) = \Tr(Q(\beta)) = 
    \beta(1)+\beta(2), \\
    &f^3(\beta) = \sqrt{\beta(1)}\beta(2)
\end{split}
\end{align}
We generate the deterministic dataset
\begin{equation}
\label{eq:detdata}
    \dataset = \{\lconst, \{\respi\}_{i=1}^{\numagents}, t \in [T]\}
\end{equation} as 
\begin{align}
    \begin{split}
    \label{eq:respgen}
    &\alpha_t \sim \mathcal{U}(0.1,1.1)^2 \in \reals^2,\, \beta_t^i \in \reals^2, \, t\in \{1,\dots,T\}, \\& \{\beta_t^i\}_{i=1}^3 \in \arg\max_{\{\beta^i\}_{i=1}^{3}}\sum_{i=1}^{3}f^i(\beta^i)\,\, s.t.\, \,\alpha_t^\top (\sum_{i=1}^{3}\beta^i) \leq 1 
    \end{split}
    \end{align}
The noisy dataset $\ndataset$ \eqref{eq:ndataset} is then generated as:
    \begin{align}
    \begin{split}
    \label{eq:respgen_noise}
     \hat{\beta}_t^i = \max\{\beta_t^i + \epsilon_t^i\,,0.01(\boldsymbol{1})\},\,\ \epsilon_t^{i} \sim \mathcal{N}(0,\sigma^2)
    \end{split}
    \end{align}
 where  $\boldsymbol{1} =[1, 1]^\prime$ and $\max$ operates elementwise.

\subsection{Deterministic Utility Reconstruction}
Here we illustrate the efficacy of the deterministic utility reconstruction result in Corollary~\ref{cor:Utrec}. Specifically, given the deterministic dataset $\dataset$ \eqref{eq:detdata} we verify that the linear program \eqref{af_ineq} has a feasible solution, and using feasible variables we reconstruct $\{U^i(\cdot)\}_{i=1}^3$ from the dataset $\dataset$. Figure~\ref{fig:utilities} displays the true utility functions \eqref{eq:NumUtils} giving rise to the dataset $\dataset$ beside the reconstructed utility functions $\{U^i(\cdot)\}_{i=1}^3$. It can be seen that the structure of each utility function can be recovered accurately in each case.

\begin{figure}[t]
  \centering

  \subfloat[$f^1(\beta) = \textrm{det}(Q^(\beta))$\label{fig:T_util1}]{
    \includegraphics[width=0.20\textwidth]{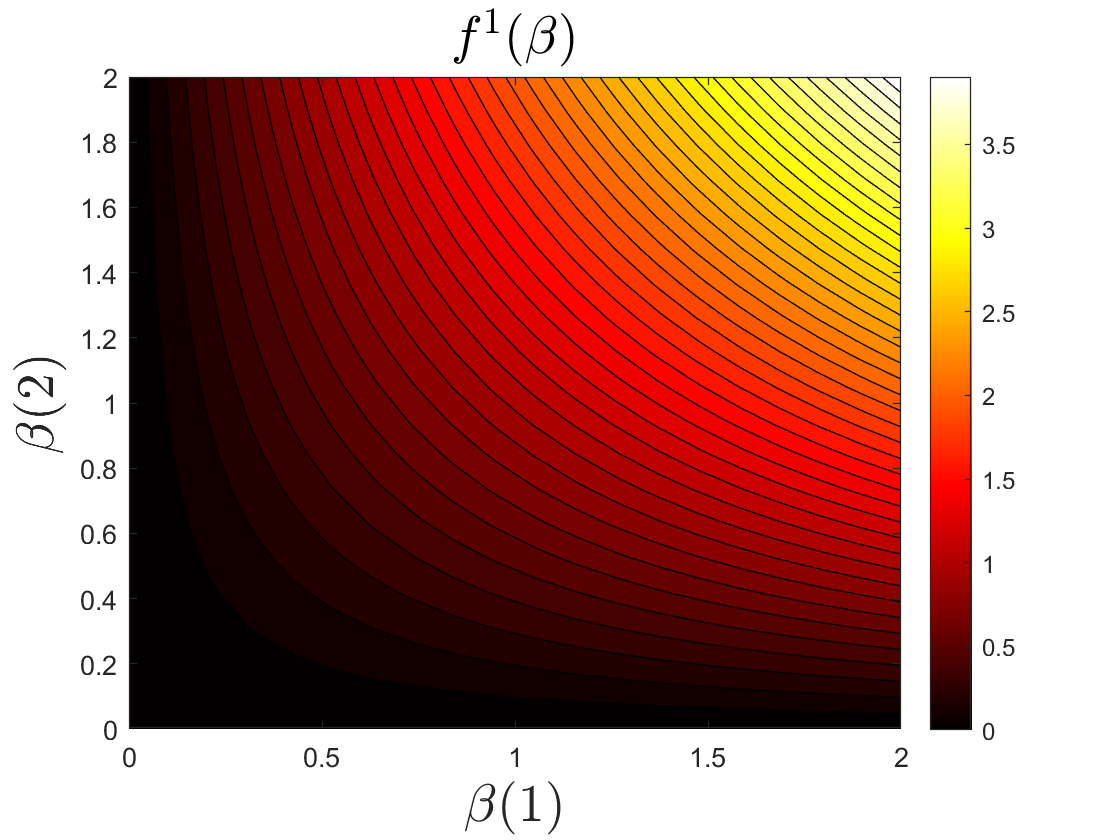}
  }\hfill
  \subfloat[$U^1(\beta)$\label{fig:util1}]{
    \includegraphics[width=0.20\textwidth]{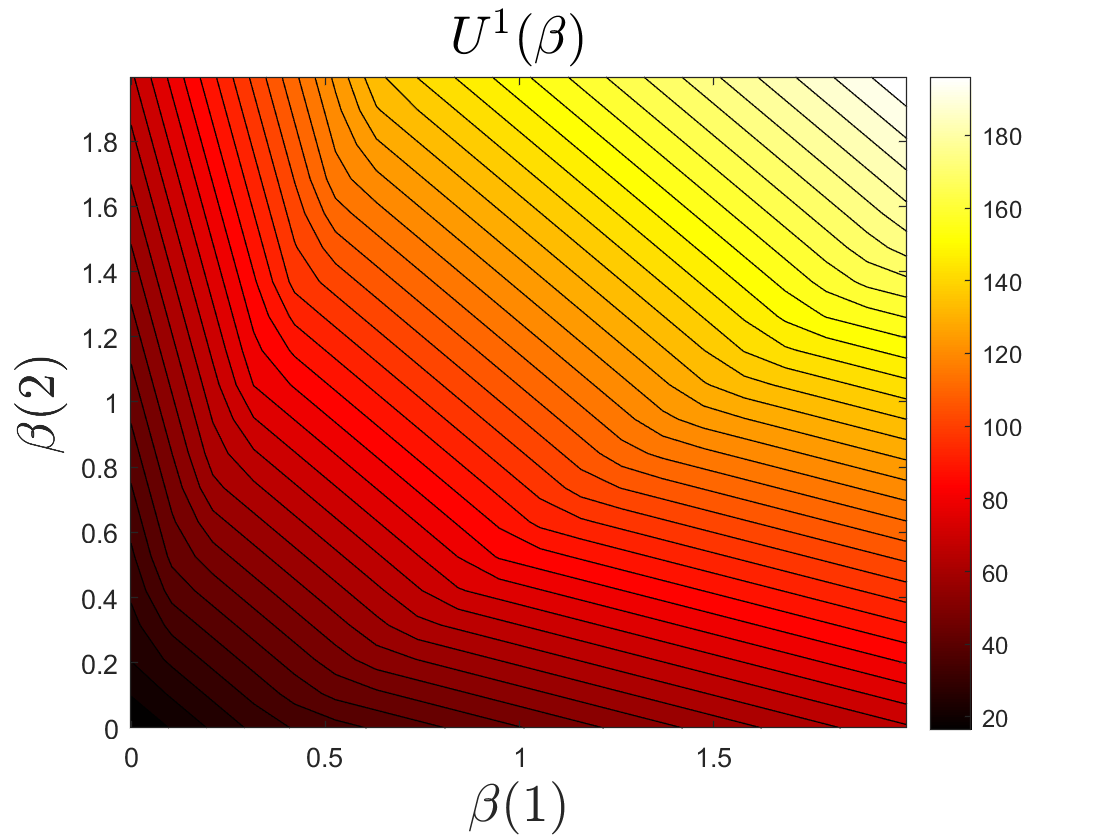}
  }

  \vskip\baselineskip

  \subfloat[$f^2(\beta) = \textrm{Tr}(Q(\beta))$\label{fig:T_util2}]{
    \includegraphics[width=0.20\textwidth]{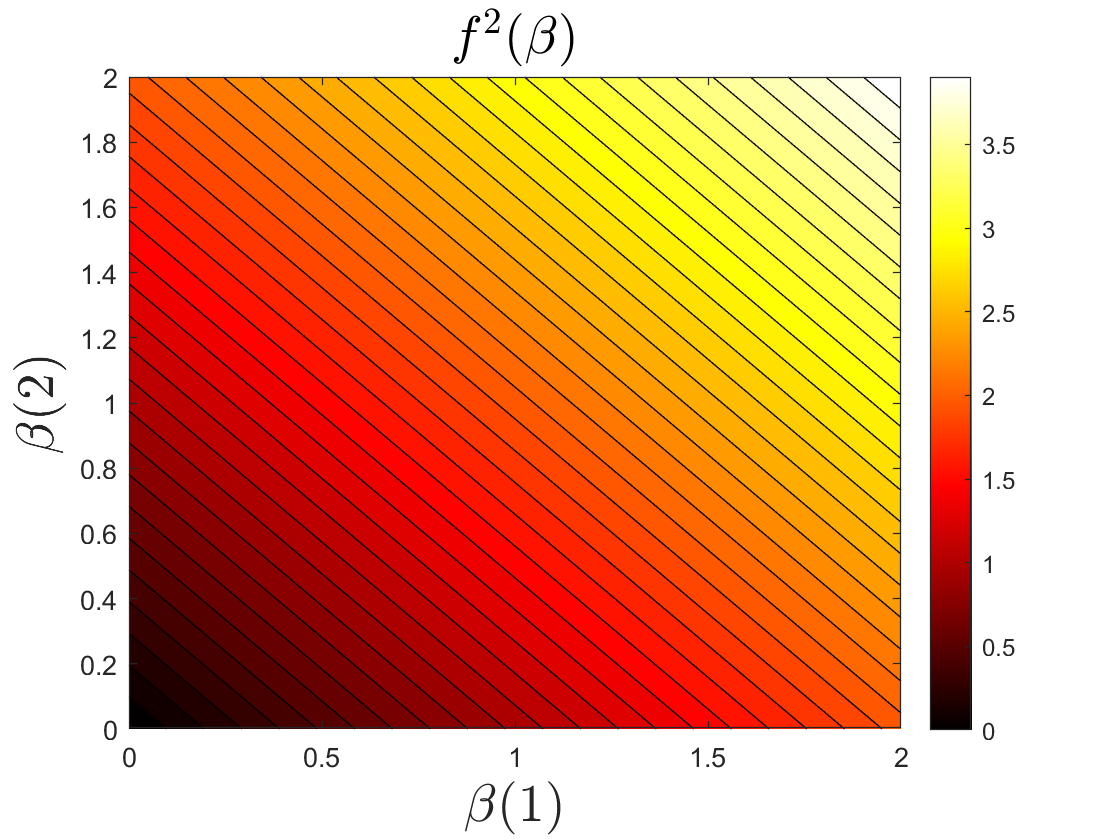}
  }\hfill
  \subfloat[$U^2(\beta)$\label{fig:util2}]{
    \includegraphics[width=0.20\textwidth]{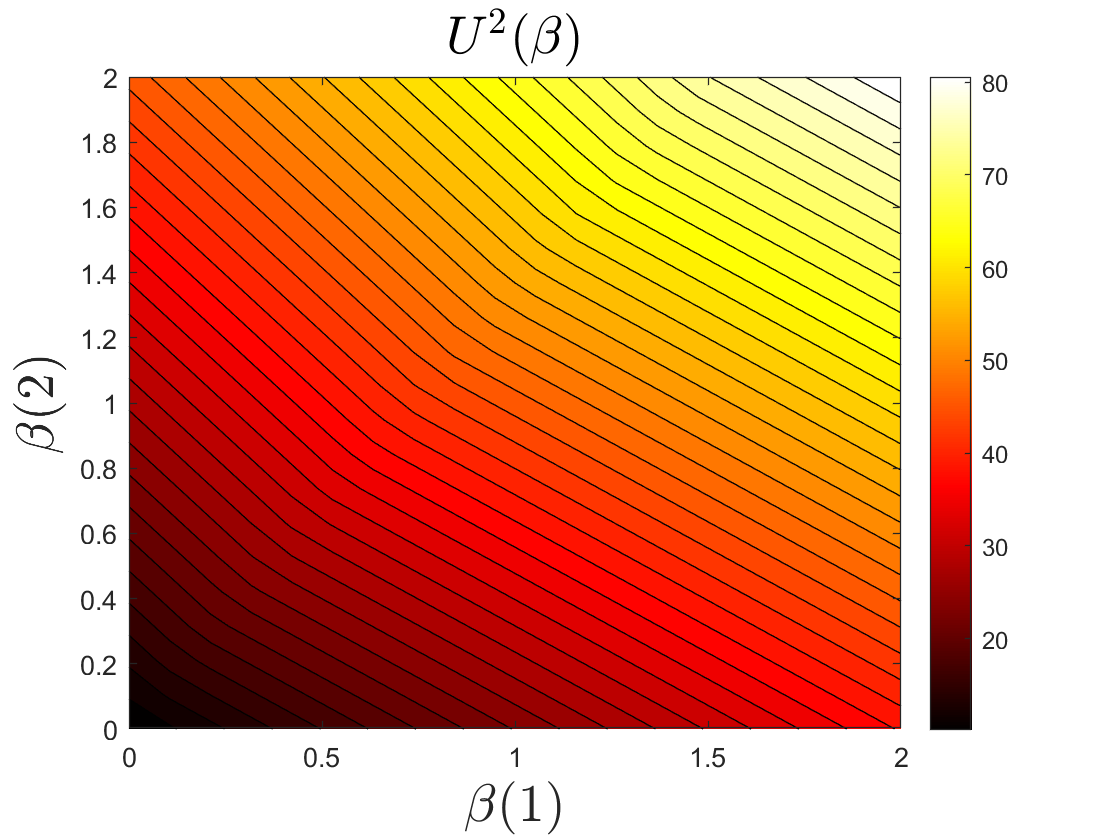}
  }

  \vskip\baselineskip

  \subfloat[$f^3(\beta) = \sqrt{\beta(1)}\beta(2)$\label{fig:T_util3}]{
    \includegraphics[width=0.20\textwidth]{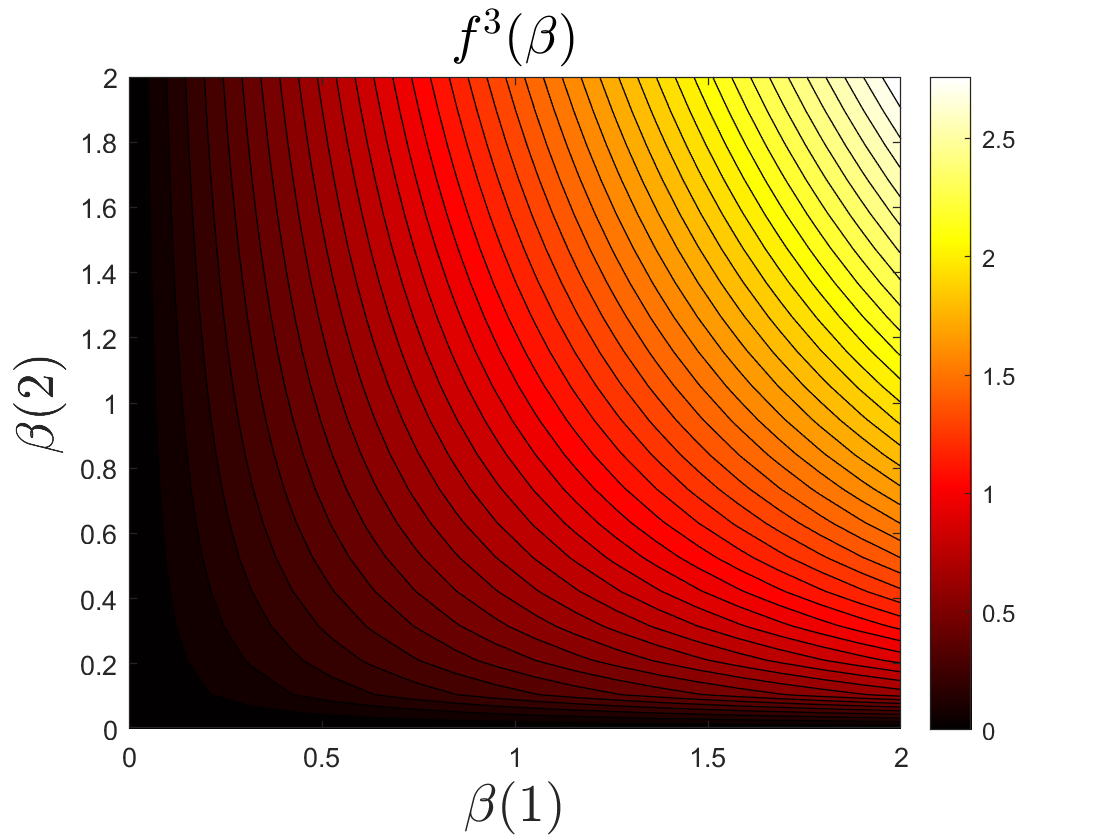}
  }\hfill
  \subfloat[$U^3(\beta)$\label{fig:util3}]{
    \includegraphics[width=0.20\textwidth]{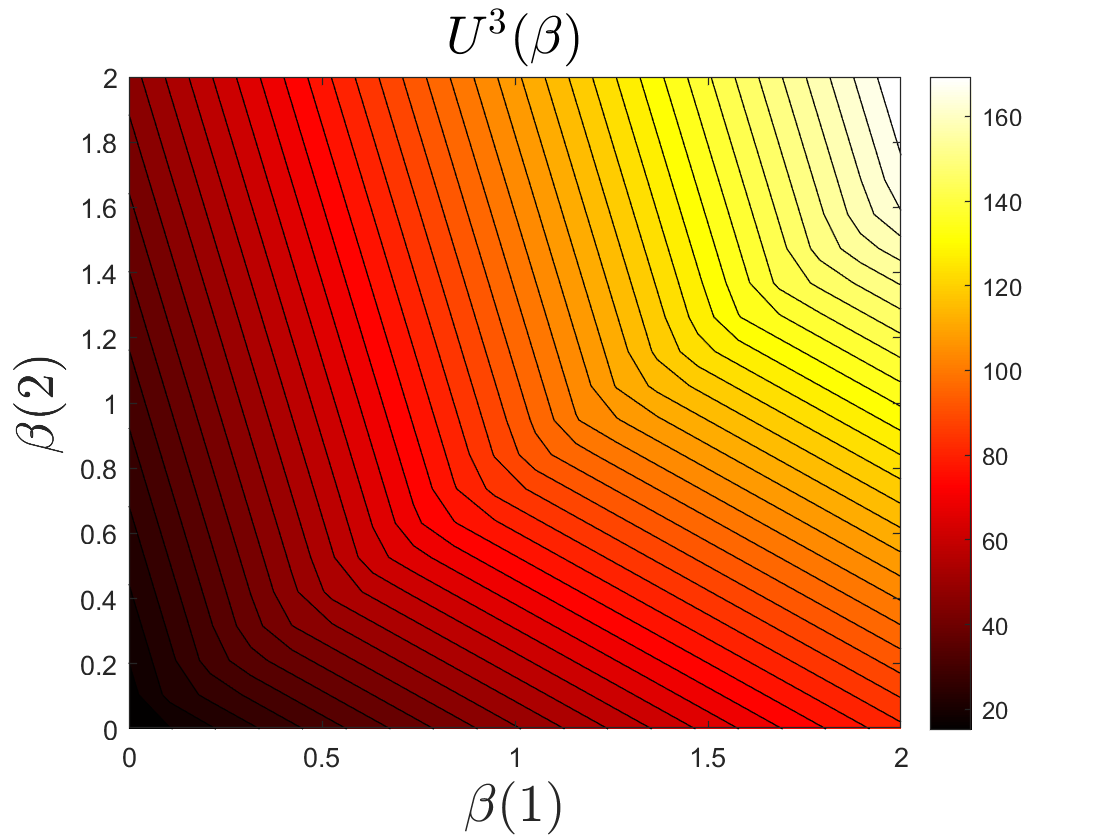}
  }

  \caption{$f^i(\beta)$ is the true objective function of the $i$'th target, inducing the responses 
  $\{\hat{\beta}_t^i\}_{t=1}^{10}$. $U^i(\beta)$ is the reconstructed objective function for radar $i$, 
  computed using the deterministic dataset $\boldsymbol{\beta}$ and \eqref{eq:Utrec}.}
  \label{fig:utilities}
\end{figure}

\subsection{Statistical Detector Implementation}
\label{sec:num_ex}


\begin{figure}
\centering
\includegraphics[width=\linewidth,scale=0.8]{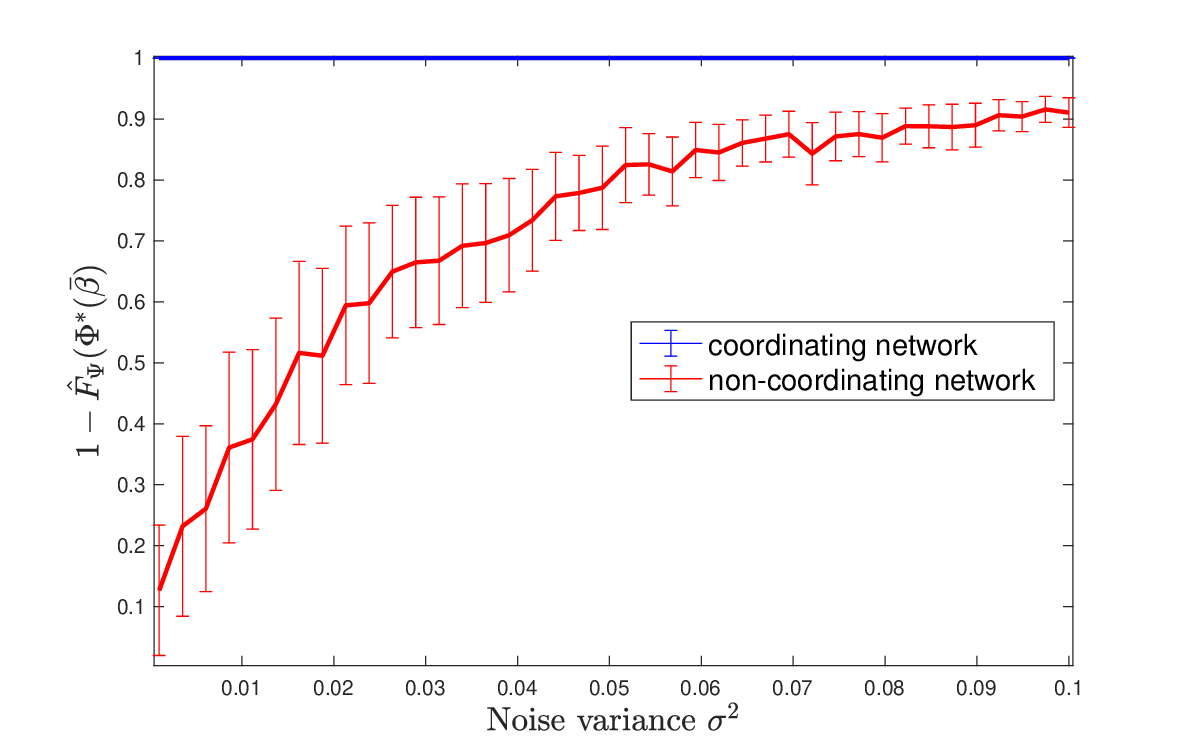}
  \caption{Statistic $1  - \hat{F}_{\Psi}(\optstat)$ as a function of variance of the noise distribution $\Lambda_t$. Error bars represent one standard deviation within the dataset produced by 300 Monte-Carlo simulations. Higher $1  - \hat{F}_{\Psi}(\optstat)$ corresponds to higher likelihood of radar network coordination in the statistical detector \eqref{eq:stat_test}.}
  \label{fig:StatDet}
\end{figure}

Here we investigate the empirical behavior of the statistic $1  - \hat{F}_{\Psi}(\optstat)$ under both $H_0$ and $H_1$. We generate the statistic from the procedure outlined in Algorithm~\ref{alg:MOOdet}, with $L = 500$, $M = 3$, $T=10$. The probe signal $\lconst \in \reals^2$ is generated randomly as $\lconst \sim U[0.1,1.1]^2 $, i.e., each element of $\lconst$ is generated as an independent uniform random variable on the interval [0.1,1.1]. To simulate a UAV network, the responses $\{\varti\}_{i=1}^{\numagents}$ are taken as solutions to the multi-objective optimization \eqref{def:cov_coord} with objective functions given by \eqref{eq:NumUtils}, and $\sw^1=\sw^2=\sw^3=1/3$. Then noisy responses $\{\bar{\var}_t^i\}_{i=1}^{\numagents}$ are obtained by adding i.i.d. Gaussian noise $\noisei_t \sim \Lambda_t = \gaussN(0,\sigma^2)$. The blue line in Figure~\ref{fig:StatDet} displays the resultant empirical statistic $1  - \hat{F}_{\Psi}(\optstat)$ as a function of noise variance. 
To simulate a non-coordinating radar network, we generate each response $\varti \sim U[0,1]^2$ independently, and similarly add Gaussian measurement noise $\noisei_t \sim \Lambda_t = \gaussN(0,\sigma^2)$. The red line in Figure~\ref{fig:StatDet} is the empirical statistic $1  - \hat{F}_{\Psi}(\optstat)$ under these circumstances, when no coordination is present.

Let us interpret the simulation results displayed in Figure~\ref{fig:StatDet}. 
Observe that the statistic $1  - \hat{F}_{\Psi}(\optstat)$ is a constant value of 1 for the noise variance range simulated. This validates our choice that the null hypothesis $H_0$ (coordination) should be chosen once the statistic surpasses a threshold. Furthermore, it indicates the strength of the statistical detector's ability to filter noise and correctly determine that coordination is present. However, as the noise variance increases the probability of Type-II error (determining $H_0$ under $H_1$) grows, since the statistic $1  - \hat{F}_{\Psi}(\optstat)$ becomes more likely to surpass a given threshold $\gamma \in (0,1)$. This is an unavoidable consequence, within any statistical detection scheme, of the degraded ability to differentiate coordination vs non-coordination as the noise power grows. However, the particular behavior displayed in Figure~\ref{fig:StatDet} gives insight into the control of Type-II error, since one may choose the threshold $\gamma$ to be arbitrarily close to one, in this small noise regime, such that the probability of Type-I error remains constant but that of Type-II error is diminished. 


\subsection{Robust Utility Estimation}
We generate the noisy dataset $\ndataset$ \eqref{eq:ndataset} for $\na=3$ agents as \eqref{eq:respgen}, with $T=5, \sigma^2 = 1$. 
We initialize the variables in Algorithm~\ref{alg:dro} as $\delta = 0.1, \, \epsilon = 0.2$. 

  We test the reconstruction accuracy of \eqref{eq:noisyut}, with parameters $\hat{\psi}$ taken from \eqref{eq:parhat} (naive approach) and Algorithm~\ref{alg:dro} (robust approach).  The reconstruction accuracy of $\{\hat{\util}^i(\cdot)\}_{i=1}^{\na}$ is quantified as the Hausdorff distance between Pareto-efficient surfaces $E_{f,\alpha}, E_{\hat{\util},\alpha}$, where we define \[E_{g,\alpha} = \{x\in\reals^n: x\in \arg\max_{\gamma}\sum_{i=1}^{\na}g^i(\gamma) \, s.t. \, \alpha^\top \gamma \leq 1\}\]
     This Hausdorff distance $H(E_{f,\alpha},E_{\hat{\util},\alpha})$ is defined as
     \begin{align*}
         &H(E_{f,\alpha},E_{\hat{\util},\alpha})  \\&\quad := \max\biggl\{\sup_{x\in E_{f,\alpha}}d(x,E_{\hat{\util},\alpha}), \sup_{y\in E_{\hat{\util},\alpha}}d(y,E_{f,\alpha})\biggr\}
     \end{align*}
    where the distance from point $a$ to set $B$ is $d(a,B) = \inf_{b\in B}d(a,b)$. The Hausdorff distance is a natural metric for quantifying reconstruction error in this 
setting because the object of interest is not the utilities themselves but the 
Pareto-efficient surfaces they induce. Two different sets of utilities may produce 
similar efficient frontiers, in which case their predictive implications for system 
behavior are essentially indistinguishable. Conversely, small perturbations in the 
utilities that result in large shifts of the frontier, resulting in poor predictive 
accuracy. The Hausdorff distance captures precisely this geometric notion: it measures 
the worst-case discrepancy between the true efficient set $E_{f,\alpha}$ and the 
reconstructed efficient set $E_{\hat{\util},\alpha}$. In other words, it quantifies the 
largest misspecification an analyst would face when using the reconstructed utilities 
to predict coordinated responses under any probe $\alpha$. This worst-case orientation 
is particularly important in adversarial or covert sensing environments, where even a 
single large deviation can compromise detection or prediction.

    \begin{table}[!h]
            
            \renewcommand{\arraystretch}{1} 
            \begin{center}
                \begin{tabular}{|p{0.8cm} || p{1.2cm} || p{2cm} | p{2cm}|} 
                    \hline 
                     & Noise Level & Average Error & Worst-Case Error \\ \hline \hline
                    Naive & $\sigma^2 = 0.5$ & 5.962 $\pm$ 0.313 &  6.481 $\pm$ 0.208\\
                    & $\sigma^2 = 1$& 6.029 $\pm$ 0.296 &  6.525 $\pm$ 0.205\\
                    & $\sigma^2 = 2$& 5.969 $\pm$ 0.329 & 6.509 $\pm$ 0.188\\
                    & $\sigma^2 = 3$& 6.026 $\pm$ 0.277 & 6.447 $\pm$ 0.095\\
                    \hline
                    Robust & $\sigma^2 = 0.5$ & 5.155 $\pm$ 0.414 &  5.764 $\pm$ 0.174\\
                    & $\sigma^2 = 1$& 5.095 $\pm$ 0.372 & 5.716 $\pm$ 0.267\\
                    & $\sigma^2 = 2$& 5.171 $\pm$ 0.383 & 5.736 $\pm$ 0.150\\
                    & $\sigma^2 = 3$& 5.152 $\pm$ 0.390 & 5.768 $\pm$ 0.273\\
                    
                    \hline
                \end{tabular}
               \caption{Average and worst-case errors, with standard deviation, for the naive and robust utility reconstruction procedures, both averaged over 100 Monte-Carlo simulations.}
                \label{tab:1}
            \end{center}
\end{table}

Table~\ref{tab:1} displays the average error and worst-case error, averaged over 100 Monte-Carlo simulations.  
     
Observe that while Algorithm~\ref{alg:dro} performs similarly to the naive reconstruction on average, its performance is significantly improved in the worst-case. Thus, we verify that Algorithm~\ref{alg:dro} achieves \textit{distributionally robust} utility estimation, \textit{without sacrificing average performance}. The distributional robustness is apparent from the reduced worst-case error, and serves as the advantage of this approach. 

\begin{figure}
\centering
  \includegraphics[width=\linewidth,scale=1]{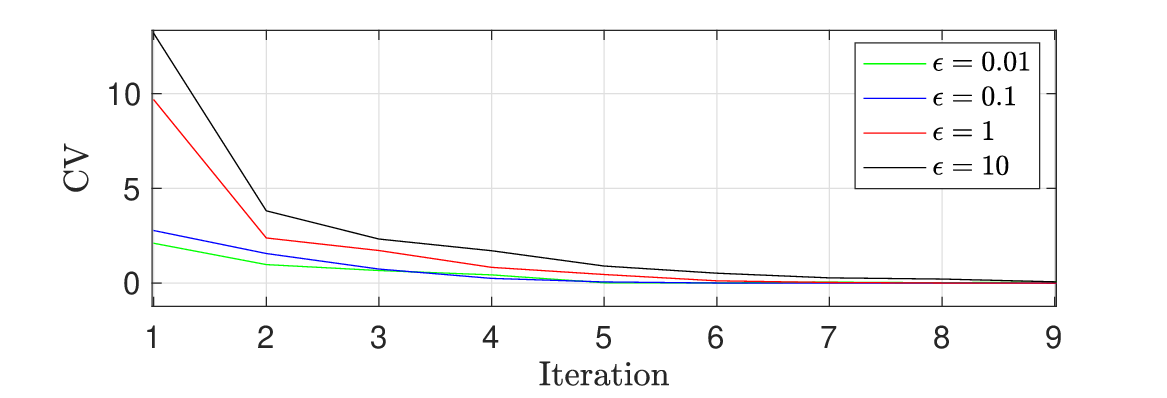}
  \caption{ Convergence of $CV$ in Algorithm~\ref{alg:dro}, with respect to varying Wasserstein-radii $\epsilon$. Algorithm~\ref{alg:dro} terminates when $CV < \delta$, and produces a $\delta$-optimal solution of \eqref{eq:robest}. It can be seen that Algorithm~\ref{alg:dro} produces a $\delta=0.1$-optimal solution within 10 iterations for each Wasserstein proximity $\epsilon$.}
  \label{fig:Algcon}
\end{figure}

Despite the apparent complexity of the semi-infinite optimization \eqref{eq:siprog}, Figure~\ref{fig:Algcon} shows that a $\delta$-optimal solution from Algorithm~\ref{alg:dro} can be achieved rapidly. Each curve is the average of 100 Monte-Carlo simulations, for different Wasserstein radii $\epsilon$. In each case Algorithm~\ref{alg:dro} produces a $\delta$-optimal solution on average within 10 iterations for $\delta = 0.1$.

\section{Conclusion}
\label{sec:conc}

We have developed a principled framework for distributionally robust multi-agent inverse reinforcement learning, that is, detecting coordination and reconstructing 
utilities in multi-agent systems. By abstracting system dynamics into a 
linearly constrained multi-objective optimization, we established necessary and 
sufficient conditions for a dataset of observed behavior to be consistent 
with Pareto-efficient coordination. This equivalence to a linear program provides a 
tractable test for coordination and, when satisfied, enables the explicit reconstruction 
of utility functions that rationalize the observed behavior. 

To handle noisy measurement regimes, we derived an optimal statistical detector with 
provable guarantees on Type-I error, and introduced a distributionally robust utility
reconstruction procedure that substantially reduces worst-case error while preserving 
average accuracy. We provided an extended example demonstrating how these mathematical formulations are naturally derived from
multi-UAV covert coordination. Numerical results demonstrated the 
effectiveness of both the detector and the robust reconstruction scheme. 

Although illustrated in the context of multi-UAV coordination, the methodology is more general. It applies 
to any multi-agent system where heterogeneous objectives are optimized under shared 
linear constraints, including wireless communication networks, vehicle platoons, smart 
grids, and cooperative robotics. In such settings, our framework allows an analyst 
to rigorously test for coordination, robustly recover the hidden objectives that drive system 
behavior, and thus anticipate responses to future environmental constraints.

\bibliographystyle{IEEEtran}
\bibliography{Bibliography.bib}

\section{Appendix}

\subsection{Kalman Filter Dynamics}
\label{sec:kfd}
Consider the linear Gaussian dynamics \eqref{lin_gaus}, \eqref{eq:noise}. 
Based on observations $\measi_1,\dots,\measi_{\ftime}$ associated to target $i$, the tracking functionality in the radar computes the target $i$ state posterior
\begin{equation*}
\label{eq:kalpost}
    \pi_{\ftime}^i = \gaussN(\kstate_{\ftime}^i,\kcov_{\ftime}^i)
\end{equation*}
where $\kstate_{\ftime}^i$ is the conditional mean state estimate and $\kcov_{\ftime}^i$ is the covariance, computed by the classical Kalman filter:
\begin{align*}
    \begin{split}
        \kcov_{\ftime + 1 | \ftime}^i &= A^i\kcov_{\ftime}^i(A^i)^\top  + Q_t(\varti) \\
        K_{\ftime + 1}^i &= C^i\kcov_{\ftime+1 | \ftime}^i (C^i)^\top  + R_t(\lconst) \\
        \kstate_{\ftime+1}^i &= A^i\kstate^i \\&\quad+ \kcov_{\ftime+1|\ftime}^i(C^i)^\top (K^i_{\ftime+1|\ftime})^{-1}(\measi_{\ftime+1} - C^iA^i\kstate^i_{\ftime}) \\
        \kcov^i_{\ftime+1} &= \kcov^i_{\ftime+1|\ftime} - \kcov^i_{\ftime+1|\ftime}(C^i)^\top (K_{\ftime+1}^i )^{-1}C^i\kcov^i_{\ftime+1|\ftime}
    \end{split}
\end{align*}
Under the assumption that the model parameters in \eqref{lin_gaus} satisfy $[A^i,C^i]$ is detectable and $[A^i,\sqrt{\sncov}]$ is stabilizable, the asymptotic predicted covariance $\kcov^i_{\ftime+1 | \ftime}$ as $k \to \infty$ is the unique non-negative definite solution of the \textit{algebraic Riccati equation} (ARE): 

\begin{align*}
    \begin{split}
        &\ARE := \\
        &- \kcov + A^i(\kcov - \kcov (C^i)^\top [C^i\kcov (C^i)^\top  + R_t(\lconst)]^{-1}C^i\kcov)(A^i)^\top  
        \\& + Q_t(\varti) = 0
    \end{split}
\end{align*}

Let $\kcov_{\stime}^{*}(\lconst,\varti)$ denote the solution of the ARE and $\kcov_{\stime}^{* -1}(\lconst,\varti)$ be its inverse, representing the asymptotic measurement \textit{precision} obtained by the radar.
  
\subsubsection{Extracting a Covertness Bound} By Lemma 3 of \cite{krishnamurthy2020identifying}, we can represent a limit $\bar{\kcov}^{-1}$ on the radar's precision of target $i$ measurement, $\kcov_{\stime}^{* -1}(\lconst,\varti)$ as the simple linear inequality $\lconst^\top  \varti \leq 1$, i.e., 
 \[\alpha_t^\top \varti \leq 1 \Longleftrightarrow  \kcov_{\stime}^{* -1}(\lconst,\varti) \leq \bar{\kcov}^{-1}\] 
 where the constant $1$ bound is taken without loss of generality. The key idea behind this equivalence is to show the asymptotic precision $\kcov^{* -1}_n(\cdot,\varti)$ is monotone increasing in the first argument $\lconst$ using the information Kalman filter formulation. Then, we can represent a constraint on the radar's average precision over measurements of all targets as
 \begin{equation*}
 \lconst^\top (\sum_{i=1}^{\numagents} \varti) \leq 1
\end{equation*}
 Thus, we recover a direct correspondence between the radar's average measurement precision and the linear inequality constraint in \eqref{def:cov_coord}. Thus, again, "collective rationality" \eqref{def:cov_coord} on the part of the UAV network can directly be interpreted as the high-level constrained multi-objective optimization \eqref{moo_int}.

\subsection{Lemma 1 Statement and Proof}
\label{sec:lem1}

\begin{lemma}
\label{lem:1}
Consider the multi-objective optimization problem \eqref{def:coord_eq}. If $\beta_t^i > 0 \ \forall i \in [M]$ then, letting $\effset(\{f^i\}_{i=1}^M, \lconst)$ denote the set of Pareto-efficient solutions to \eqref{def:coord_eq}, we have 
\begin{equation}
\label{eq:lem1state}
    \bigcup_{\sw \in \psimplex}\woptset = \effset(\{f^i\}_{i=1}^M, \lconst)
\end{equation}
\end{lemma}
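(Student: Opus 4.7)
The inclusion $\bigcup_{\mu \in \psimplex} S(\mu) \subseteq \effset(\{f^i\}_{i=1}^M, \lconst)$ is already established in \eqref{eq:effrel}, so my plan is to establish only the reverse inclusion under the interiority hypothesis $\beta_t^i > 0$. I would proceed via a KKT/scalarization argument tailored to the concave, monotone utilities and the polyhedral constraint set $X_t^c$.

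First, fix an arbitrary Pareto-efficient allocation $\boldsymbol{\beta}^* = \{\beta^{*i}\}_{i=1}^M$ with $\beta^{*i} > 0$ componentwise. Because each $f^i$ is concave and $X_t^c$ is convex with the origin strictly interior (so Slater's condition holds), the standard scalarization theorem for convex multi-objective programs yields nonnegative multipliers $\mu^1,\dots,\mu^M$, not all zero, together with $\lambda \geq 0$ for the coupling constraint, such that $\boldsymbol{\beta}^*$ maximizes the associated Lagrangian. Since each $\beta^{*i}$ lies strictly in the positive orthant, the nonnegativity constraints are inactive and complementary slackness implies, for every agent $i$, the existence of a supergradient $g^i \in \partial f^i(\beta^{*i})$ with
\begin{equation*}
\mu^i\, g^i \;=\; \lambda\, \alpha_t .
\end{equation*}

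Next, I would invoke the monotone-increasing (locally non-satiated) property of each $f^i$ to conclude that every such $g^i$ is componentwise nonnegative and nonzero. Assuming $\alpha_t \neq 0$ (otherwise the coupling constraint is vacuous and the claim is immediate), I argue by contradiction: if $\mu^i = 0$ for some $i$, the first-order condition forces $\lambda \alpha_t = 0$ and hence $\lambda = 0$; but then, for any index $j$ with $\mu^j > 0$ (at least one exists since the multipliers are not all zero), we would obtain $g^j = 0$, contradicting non-satiation. Therefore $\mu^i > 0$ for every $i$, and normalizing so that $\sum_i \mu^i = 1$ places $\mu \in \psimplex$ with $\boldsymbol{\beta}^* \in S(\mu)$, which is the desired containment.

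The step I expect to be the main obstacle is the scalarization itself in the non-smooth regime: if the $f^i$ are merely concave rather than differentiable, the argument must be phrased via superdifferentials and a convex separating hyperplane between the attainable image $\{(f^1(\beta^1),\dots,f^M(\beta^M)) : \boldsymbol{\beta} \in X_t^c\}$ and the strict upper orthant at $(f^1(\beta^{*1}),\dots,f^M(\beta^{*M}))$. This separation becomes delicate if the Pareto-optimal utility vector lies on a flat face of the attainable image, and is where one must be most careful. Once the KKT multipliers $(\mu^i,\lambda)$ are extracted cleanly, the monotonicity-driven exclusion of vanishing weights is a routine calculation.
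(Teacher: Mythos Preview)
Your approach is correct and shares the paper's high-level strategy: both start from the right inclusion in \eqref{eq:effrel} to obtain some $\mu \in \simplex$ with $\{\beta_t^i\} \in S(\mu)$, then establish the contrapositive $\mu^j = 0 \Rightarrow \beta_t^j = 0$. The difference is in how that contrapositive is argued. The paper gives a direct resource-reallocation argument: if $\mu^j = 0$ but $\beta_t^j > 0$, one shrinks $\beta_t^j$ to free slack $\alpha_t^\top \beta_t^j > 0$ in the coupling constraint, then reassigns that slack to some $\beta_t^k$ with $\mu^k > 0$, strictly improving the weighted-sum objective by monotonicity of $f^k$ and contradicting optimality in $S(\mu)$. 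Your route instead extracts KKT multipliers for the scalarized problem and chains the stationarity conditions: $\mu^i = 0$ at an interior $\beta_t^i$ forces $\lambda \alpha_t = 0$, hence $\lambda = 0$, which then forces a zero supergradient at some agent with positive weight, contradicting local non-satiation. Both encode the same economic intuition (budget spent on an unweighted agent cannot be optimal), but the paper's version is more elementary and avoids KKT machinery, while yours is notationally cleaner and sidesteps the paper's somewhat informal use of $f^{k^{-1}}$ on a vector argument. One remark on your self-identified obstacle: the separating-hyperplane step you flag as delicate is exactly the right inclusion of \eqref{eq:effrel}, which the paper takes as given from \cite{miettinen1999nonlinear}; the substantive content of the lemma lies entirely in the positivity-of-weights step that follows, which you handle cleanly.
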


\begin{proof}
    Notice that since each $f^i$ is assumed to be concave, and the feasible set in \eqref{def:coord_eq} is convex, by \eqref{eq:effrel} there exists $\sw \in \simplex$ such that \[\{\beta^i\}_{i=1}^M \in \effset(\{f^i\}_{i=1}^M, \lconst) \Rightarrow \{\beta^i\}_{i=1}^M \in \woptset\]
    Now notice that if this $\sw$ is always strictly positive, i.e., $\sw \in \psimplex$, then \eqref{eq:lem1state} holds. So now we show $\var_t^j > 0 \Rightarrow \sw^j > 0$
    Suppose $\sw^j = 0$ and let $\{\var_t^i\}_{i=1}^{\numagents}$ satisfy \[\alpha_t^\top (\sum_{i=1}^{\numagents}\var_t^i) \leq p^*\] with $\var_t^j > 0$. Then \[\alpha_t^\top (\sum_{i=1}^{\numagents} \var_t^i) = \lconst (\sum_{i\neq j}\var_t^i) + \lconst(\var_t^j) \leq p^* \]
    and \[\sum_{i=1}^{\numagents}\sw^i f^i(\var_t^i) = \sum_{i=1, i\neq j}^{\numagents}\sw^i f^i(\var_t^i)\]
    and since $\lconst > 0$, $\var_t^j > 0$, $\exists \ \delta > 0$ such that
    \[\alpha_t^\top (\sum_{i=1, i \neq j}^{\numagents} \var_t^i) \leq p^* - \delta\]

Let \[X_j(\lconst, p^*) := \{\{\var_t^i\}_{i\neq j} : \lconst (\sum_{i=1, i\neq j}^{\numagents} \var_t^i) \leq p^*\} \], and fix some $\var_t^k, k\neq j$. we have that
\[\var_t^k \leq f^{k^{-1}}\left(\frac{1}{\sw^k}(p^* - \delta - \sum_{i \neq k}\sw^i f^i(\var_t^i) )\right)\]
Now take
\[\bar{\var_t}^k =  f^{k^{-1}}\left(\frac{1}{\sw^k}(p^* - \sum_{i \neq k}\sw^i f^i(\var_t^i)) \right)\]
Then, since $f^k$ is monotone increasing, we have 
\begin{align*}
    &\bar{\var_t}^k > \var_t^k, so \\
    &\sum_{i=1}^{\numagents} \sw^i f^i(\var_t^i) < \sum_{i=1,i\neq k}^{\numagents} \sw^i f^i(\var_t^i) + \sw^kf^k(\bar{\var_t}^k)
\end{align*}
and \[\{\var_t^i\}_{i=1,i\neq k}^{\numagents}\cup\{\bar{\var_t}^k\} \in X_j(\lconst,p^*)\] so 
\[\{\var_t^i\}_{i=1}^{\numagents} \notin \arg\max_{\{\gamma^i\}_{i=1}^{\numagents}}\sum_{i=1}^{\numagents} \sw^i f^i(\gamma^i) \ s.t. \ \alpha_t^\top (\sum_{i=1}^{\numagents}\gamma^i \leq p^*)\]
and thus by contradiction we have that for any $\sw^j, \var_t^j$ in \eqref{eq:moo}, we have $\sw^j = 0 \Rightarrow \var_t^j = 0$. Note that this directly implies $\var_t^j > 0 \Rightarrow \sw^j > 0$ and so we are done.
\end{proof}


\subsection{Proof of Theorem \ref{thm:stat_det}}
\label{pf:stat_det}
\begin{proof}[Proof: 1]
Suppose $H_0$ holds. By Theorem \ref{thm:cherchye1}, $H_0$ is equivalent to \eqref{af_ineq} having a feasible solution. Let $(\bar{u}_t^i,\bar{\lambda}_t^i, t \in [T])_{i=1}^M$ denote a feasible solution to \eqref{af_ineq}. Then substituting $\nrespi = \respi - \noisei_t$, it is apparent that ($\bar{u}_t^i, \bar{\lambda}_t^i, \Phi = \rvtesti$) is feasible. So, clearly the minimizing solution of \eqref{eq:LP} satisfies $\optstati \leq \rvtesti \ \forall i \in [M]$.

\end{proof}

\begin{proof}[Proof: 2]
First note that \[\bigcap_i \{\optstati \leq \rvtesti\} \subseteq \{\optstat \leq \rvtest\}\]
Then from \eqref{eq:H0equiv}, observe that \[\{H_0\} = \{H_0\}\bigcap\{ \optstat \leq \rvtest\}\] 
Then the probability of Type-I error is 
\begin{align}
\begin{split}
\label{eq:T1err1}
    &\Popt(H_1 | H_0) \\&\quad = \Prob(\fccdf(\optstat) \leq \gamma \ | \{H_0\}\bigcap\{ \optstat \leq \rvtest\})
\end{split}
\end{align}
Now if $\optstat = \rvtest$, then since $\fccdf(\rvtest)$ is uniform
in [0,1] we have $\Popt(H_1 | H_0) = \gamma$.
If $\optstat < \rvtest$ then
\begin{align*}
    \begin{split}
&\fccdf(\optstat) \geq \fccdf(\rvtest) \\
&\quad \Rightarrow \Prob(\fccdf(\optstat) \leq \gamma) \leq \Prob(\fccdf(\rvtest) \leq \gamma) \leq \gamma\\
& \quad  \Rightarrow \Popt(H_1 | H_0) \leq \gamma
    \end{split}
\end{align*} 
\end{proof}

\begin{proof}[Proof: 3]
Suppose $\barstati > \optstati \ \forall i \in [M] \Rightarrow \baroptstat := \max_i \barstati > \optstat$. Then we have 
\begin{align*}
    \begin{split}
        &\Prob(\fccdf(\baroptstat) \leq \gamma | \bigcap_i \{\barstati \leq \rvtesti\}) \\
        &\geq P(\fccdf(\optstat) \leq \gamma  | \bigcap_i \{\optstati \leq \rvtesti\} \\
        &\Rightarrow \Prob_{\bar{\Phi}(\ndataset)}(H_1|H_0) \geq \Popt(H_1|H_0) \ \forall \bar{\Phi} \in [\optstat,\rvtest]
    \end{split}
\end{align*}
\end{proof}

\subsection{Example Radar Waveform Specifications}
\label{ap:waveforms}
To give a precise structure to the radar dynamics, the observation noise covariance $R_t(\alpha_t)$ in \eqref{eq:noise} can directly be mapped from particular radar waveforms. Specific waveform examples, along with their noise covariance matrices, are presented here. Further details on maximum likelihood estimation involving the radar ambiguity function can be found in \cite{van2004detection}, \cite{kershaw1994optimal}. 

The key idea is that by adapting the waveform parameters, the radar can modulate the covariance matrix $R(\alpha)$. This modulation can be viewed at a higher level as an adaptation of the eigenvalues of $R(\alpha)$. We treat $\alpha$ as the vector of eigenvalues of $R^{-1}(\alpha)$, so that increasing $\alpha$ increases the measurement precision. Such an increase directly corresponds to, or is enacted by, changes to the physical-layer waveform parametrization, as illustrated above. 

Next, given the above Linear Gaussian specification of the multi-target dynamics \eqref{lin_gaus}, we present two multi-target filtering examples. The goal is to illustrate how the spectral interpretation of $\alpha_t$ and 
 $\beta_t^i$ in \eqref{eq:noise} gives rise within these algorithms to the linear constraint $\alpha_t(\sum_{i=1}^M\beta_t^i) \leq 1$ in \eqref{def:coord_eq}. Recall that this linear constraint should correspond to a physical-layer bound on the radar's average measurement precision.
The waveform specifications involve the following terms:
 \begin{itemize}
     \item $c$ denotes the speed of light (in free space),
     \item $\omega_c$ denotes the carrier frequency,
     \item $\theta$ is an adjustable parameter in the waveform,
     \item $\eta$ is the signal to noise ratio at the radar,
     \item $j = \sqrt{-1}$ is the unit imaginary number,
     \item $s(t)$ is the complex envelope of the waveform,
     \item $\alpha$ is the vector of eigenvalues of $R^{-1}$
 \end{itemize}
 We now provide three example waveforms and their resulting observation noise covariance matrices $R(\alpha)$:
 \begin{enumerate}
\item Triangular Pulse - Continuous Wave
\begin{align*}
\begin{split}
    s(t) &= \begin{cases}
        \sqrt{\frac{3}{2\theta}}\left(1 - \frac{|t|}{\theta} \right) \quad &-\theta < t < \theta \\
        0 &\textrm{otherwise} 
    \end{cases} \\
    R(\alpha) &= \begin{bmatrix}
    \frac{c^2\theta^2}{12\eta} & 0 \\
    0 & \frac{5c^2}{2\omega_c^2\theta^2\eta}
    \end{bmatrix}
\end{split}
\end{align*}

\item Gaussian Pulse - Continuous Wave
\begin{align*}
\begin{split}
    s(t) &= \left(\frac{1}{\pi\theta^2} \right)^{1/4}\exp\left(\frac{-t^2}{2\theta^2} \right) \\
    R(\alpha) &= \begin{bmatrix}
            \frac{c^2 \theta^2}{s\eta} & 0 \\
            0 & \frac{c^2}{2\omega_c^2\theta^2\eta}
    \end{bmatrix}
\end{split}
\end{align*}

\item Gaussian Pulse - Linear Frequency Modulation Chirp
\begin{align*}
s(t) &= \left(\frac{1}{\pi \theta_1^2} \right)^{1/4}\exp\left(-\left(\frac{1}{2\theta_1^2}- j\theta_2 \right) t^2\right) \\R(\alpha) &= \begin{bmatrix}
\frac{c^2\theta_1^2}{2\eta} & \frac{-c^2\theta_2\theta_1^2}{\omega_c\eta} \\
\frac{-c^2\theta_2\theta_1^2}{\omega_c \eta} & \frac{c^2}{\omega_c^2\eta}\left(\frac{1}{2\theta_1^2} + 2\theta_2^2\theta_1^2 \right)
\end{bmatrix}
\end{align*}
\end{enumerate}
\subsection{Multi-Target Filtering: Joint Probabilistic Data Association Filter}
\label{ap:jpdaf}

The joint probabilistic data association filter (JPDAF) operates under the regime where $n$ measurements $y_k^j, j \in [n]$ \eqref{lin_gaus} of $m$ targets are obtained, and it is not known which measurements correspond to which target. See \cite{bar1995multitarget} for clarification of any details.

\subsubsection{Filter Dynamics} Define the empirical validation matrix $\Omega = [\omega_{jt}, j\in[n],t\in\{0,\dots,m\}$, with $\omega_{jt} = 1$ if measurement $j$ is in the \textit{validation gate} of target $t$, and $0$ otherwise. It is common to let the $t=0$ index correspond to "none of the targets". 

Now we construct an object $\theta$ known as the "joint association event", as 
\[\theta = \bigcap_{j=1}^m \theta_{jt_j}\]
where 
\begin{itemize}
    \item[-] $\theta_{jt}$ represents the event that measurement $j$ originated from target $t$
    \item[-] $t_j$ is the index of the target which measurement $j$ is associated with in the event under consideration
\end{itemize}
So, $\theta$ can represent any possible set of associations between measurements and targets.

Then, we can form the \textit{event matrix} 
\[\hat{\Omega}(\theta) = [\hat{\omega}_{jt}]\]
where 
\[\hat{\omega}_{jt} = \begin{cases}
    1, \, \theta_{jt} \in \theta \\
    0, \, \textrm{else}
\end{cases}\]
$\hat{\Omega}(\theta)$ is thus the indicator matrix of measurement-target associations in event $\theta$. 

We say an event $\theta$ is a \textit{feasible association event} if
\begin{enumerate}
    \item a measurement is associated to only one source,
    \begin{equation}
        \label{eq:msource}
        \sum_{t=0}^m \hat{\omega}_{jt}(\theta) = 1, \quad \forall \,j\in[n]
    \end{equation}
    \item at most one measurement originates from each target, 
    \begin{equation}
        \label{eq:tdi}
        \delta_t(\theta) := \sum_{j=1}^n \hat{\omega}_{jt}(\theta) \leq 1, \quad \forall \, t\in [m]
    \end{equation}
\end{enumerate}
Denote by $\Theta$ the set of all feasible events.

The binary variable $\delta_t(\theta)$ is known as the "target detection indicator" since it indicates whether, in event $\theta$, a measurement $j$ has been associated to target $t$. We may also define a "measurement association indicator" 
\begin{equation}
\label{eq:mai}
    \tau_j(\theta) := \sum_{t=1}^m \hat{\omega}_{jt}(\theta)
\end{equation}
which indicates if a particular measurement $j$ is associated with a target $t$. Note the difference between \eqref{eq:mai} and \eqref{eq:msource}; the latter sums from $0$ to include the possibility of a measurement being assigned to "no target", i.e., clutter, while the former sums from $1$, indicating if the measurement has been assigned to an actual target. 

Using these definitions we can write the number of false (unassociated) measurements in event $\theta$ as 
\begin{equation}
    \label{eq:fm}
    \phi(\theta) := \sum_{j=1}^n[1-\tau_j(\theta)]
\end{equation}

Using these preliminary concepts, the JPDAF can be formulated by first deriving the posterior probability of joint-association events given the measured data, then incorporating this into a standard filtering scheme akin to the Kalman filter. The filtering can be done in an uncoupled or coupled manner; the former assumes target measurements are independently distributed, and the latter is capable of correlations in target state estimation errors. 

\textit{Uncoupled Filtering}: Now given a particular feasible joint-association event $\theta_k \in \Theta$, and letting $\delta_t, \tau_j, \phi$ be shorthand for \eqref{eq:tdi}, \eqref{eq:mai}, \eqref{eq:fm}, respectively evaluated at $\theta_k$, \cite{bar1995multitarget} derives the posterior probability $P(\theta_k| \{y_k^j\}_{j=1}^n)$, under the uncoupled assumption, as 
\begin{align*}
\begin{split}
\label{eq:jointpost}
&P(\theta_k | \{y_k^j\}_{j=1}^n) \\&\propto \frac{\phi!}{m_k!}\mu_{F}(\phi)V^{-\phi}\prod_{j}[f_{tj}(y_k^j)]^{\tau_j} \prod_t (P^t_D)^{\delta_t}(1-P^t_D)^{1-\delta_t}
\end{split}
\end{align*}
where $P_D^t$ is the detection probability of target $t$, $m_k = n-\phi$, and 
\[f_{tj}(y_k^j) = \gaussN(y_k^j; \hat{y}^{t_j}_{k|k-1}, S^{t_j}_k)\]
with $ \hat{y}^{t_j}_{k|k-1}$ the predicted measurement for target $t_j$ in the previous iteration of the filter, and $S^{t_j}_k$ the associated innovation covariance matrix. $\mu_{F}(\phi)$ is the probability mass function governing the number of false measurements $\phi$, and such measurements not associated with a target are assumed uniformly distributed in the surveillance region of volume $V$.

Given, this posterior probability the uncoupled filter proceeds by separately filtering each target state independently. For brevity we do not introduce this filtering process, but do so for the more sophisticated and robust \textit{coupled} filter.

\textit{Coupled Filtering}:
Given a particular feasible joint-association event $\theta_k \in \Theta$, and letting $\delta_t, \tau_j, \phi$ be shorthand for \eqref{eq:tdi}, \eqref{eq:mai}, \eqref{eq:fm}, respectively evaluated at $\theta_k$, \cite{bar1995multitarget} derives the posterior probability $P(\theta_k| \{y_k^j\}_{j=1}^n)$  as 
\begin{align*}
\begin{split}
&P(\theta_k | \{y_k^j\}_{j=1}^n) \propto \frac{\phi!}{m_k!}\mu_{F}(\phi)V^{-\phi}f_{t_{j_1}, t_{j_2}, \dots}(y_k^j, j:\tau_j=1) \\&\prod_t (P^t_D)^{\delta_t}(1-P^t_D)^{1-\delta_t}
\end{split}
\end{align*}
where here $f_{t_{j_1}, t_{j_2}, \dots}$ is the joint pdf of the measurements of the targets under consideration, and $t_{j_i}$ is the target which $y_{k}^{j_i}$ is associated in event $\theta_k$. 
Now we introduce the Joint Probabilistic Data Association Coupled Filter (JPDACF) state estimation and covariance update. 

We form the stacked state vector of predicted states, and associated covariance, as 
\[\hat{x}_{k|k-1} = \begin{bmatrix} \hat{x}^1_{k|k-1} \\ 
\vdots \\
\hat{x}^m_{k|k-1}
\end{bmatrix}\]

\[P_{k|k-1} = \begin{bmatrix}   
P^{1\,1}_{k|k-1}\, \dots \, P^{1\,m}_{k|k-1} \\
P^{m\,1}_{k|k-1}\, \dots \, P^{m\,m}_{k|k-1}
\end{bmatrix}\]

where $P^{t_1\,t_2}$ is the cross-covariance between targets $t_1$ and $t_2$. The coupled filtering is done as follows:
\[\hat{x}_{k|k} = \hat{x}_{k|k-1} + W_k \sum_{\theta} P(\theta | \{y_k^j\}_{j=1}^n)[\boldsymbol{y}_k(\theta) - \hat{y}_{k|k-1}]\]
where 
\[\boldsymbol{y}_k(\theta) = \begin{bmatrix} 
y_k^{j_1(\theta)}\\ \vdots \\ y_k^{j_m(\theta)}
\end{bmatrix}\]
and $j_i(\theta)$ is the measurement associated with target $i$ in event $\theta$. The filter gain $W_k$ is given by 
\[W_k = P_{k|k-1} \hat{C}_k^\top \left[\hat{C}_k P_{k|k-1} \hat{C}_k^\top  + \hat{R}_k\right]^{-1}\] 
where
\begin{align*}
    &\hat{C}_k = \textrm{diag}\left[\delta_{1}(\theta) C_k^1, \dots, \delta_m(\theta) C_k^m\right] \\
    &\hat{R}_k = \textrm{diag}\left[R_k^1, \dots,R_k^m\right]
\end{align*}
are the block diagonal measurement and noise covariance matrices. The binary detection indicator variables $\delta_i(\theta)$ accounts for the possibility of a measurement not being associated to target $i$. The predicted stacked measurement vector is 
\[\hat{y}_{k|k-1} = \hat{C}_k\hat{x}_{k|k-1} = \hat{C}_k\hat{A}_{k-1} \hat{x}_{k-1}\]
with $\hat{A}_{k-1} = \textrm{diag}[A_{k-1}^1,\dots,A_{k-1}^m]$ the block diagonal state update matrix. 

The covariance of the updated state is given as
\begin{equation}
\label{eq:CovUpd}
P_{k|k} = P_{k|k-1} + [1-\psi_{0}]W_k \hat{S}_k W_k^\top  + \tilde{P}_k
\end{equation}
where $\hat{S}_k = \hat{C}_k P_{k|k-1} \hat{C}_k^\top  + \hat{R}_k$ is the innovation covariance,
\[\psi_{jt} := \sum_{\theta : \theta_{jt} \in \theta}P(\theta | \{y_k^j\}_{j=1}^n)\]
and $\psi_0 := \sum_{j=1}^m \psi_{j 0}$ is the probability that no measurements arise from targets. $\tilde{P}_k$ is the spread of the innovation terms:
\[\tilde{P}_k := W_k \tilde{S}_k W_k^\top \]
with
\[ \tilde{S}_k = \begin{bmatrix} \sum_{j=1}^{m_k} \psi_{j1}\left[y_k^1 - \hat{x}^1_{k|k-1}\right]\\ \cdot \left[y_k^1 - \hat{x}^1_{k|k-1}\right]^\top   -\nu_{1,k} \nu_{1,k} ^\top \\ \vdots \\ \sum_{j=1}^{m_k} \psi_{jm}\left[ y_k^m - \hat{x}^m_{k|k-1}\right]\\ \cdot\left[ y_k^m - \hat{x}^m_{k|k-1}\right]^\top  - \nu_{m,k}\nu_{m,k}^\top 
\end{bmatrix} \] 
and
\[\nu_{i,k} = \sum_{j=1}^{m_k}\psi_{ji}\left[ y_k^i - \hat{x}^i_{k|k-1}\right]\]

\subsubsection{Extracting a Revealed Preference Bound} The crucial observation is that, as in the Kalman filter algebraic Riccati equation \eqref{eq:ARE}, the covariance \eqref{eq:CovUpd} is monotone decreasing in $\beta_t^i$ for all $i$, since this corresponds to increasing $\hat{R}_k$ for fixed $k$. Thus, the asymptotic measurement precision (inverse of asymptotic predicted covariance) is monotone \textit{non-decreasing} in $\beta_t^i$, and by the same reasoning as Lemma 3 of \cite{krishnamurthy2020identifying}, we may derive the equivalence 
\[\alpha_t\left(\sum_{i=1}^M \beta_t^i \right) \leq 1 \Longleftrightarrow \lim_{k\to\infty}P_{k|k}^{-1}(\alpha_t, \{\beta_t^i\})\leq \hat{P}^{-1} \]
Thus, we again have that the constraint $\alpha_t\left(\sum_{i=1}^M \beta_t^i \right) \leq 1$ is a natural representation for a bound on the average measurement precision. 

Without loss of generality, we can take $\varti > 0 \ \forall t \in [T], i \in [\numagents]$. Then by Lemma~\ref{lem:1} in Appendix A, \eqref{def:coord_eq} is equivalent to 
\begin{align}
\begin{split}
\label{eq:moo}
    \{\varti\}_{i=1}^{\numagents} \in &\arg\max_{\{\argo^i\}_{i=1}^{\numagents}} \sum_{i=1}^{\numagents} \sw^i f^i(\argo^i) \ \ s.t. \ \alpha_t^\top  (\sum_{i=1}^{\numagents}\argo^i ) \leq 1
\end{split}
\end{align}
for any $\sw \in \psimplex$.

Recall that we are interested in the \textit{inverse} multi-objective optimization problem; in the following section we provide a necessary and sufficient condition for the existence of objective functions for which the observed signals $\{\varti\}_{i=1}^{\numagents}$ satisfy constrained multi-objective optimization.

\end{document}